\newcommand{\ds}{\displaystyle}
\newcommand{\E}{\text{E}}
\newcommand{\Var}{\text{Var}}
\newcommand{\Cov}{\text{Cov}}
\newcommand{\X}{\mathcal{X}}
\newcommand{\B}{\mathcal{B}}
\newcommand{\real}{\mathbb{R}}
\newcommand\numberthis{\addtocounter{equation}{1}\tag{\theequation}}
\renewcommand{\arraystretch}{1.5}
\newcommand{\RN}[1]{%
  \textup{\uppercase\expandafter{\romannumeral#1}}%
}
\newtheorem{theorem}{Theorem}
\newtheorem{result}{Result}
\newtheorem{lemma}{Lemma}
\newtheorem{ass}{Assumption}
\theoremstyle{remark}
\newtheorem{remark}{Remark}
\definecolor{Gray}{gray}{0.9}
\title{Estimating Monte Carlo variance from multiple Markov chains}
\author{
  Kushagra Gupta \\
  Department of Statistics\\
  Stanford University\\
  \texttt{kushgpt@stanford.edu} \and \
  %% examples of more authors
 Dootika Vats\footnote{corresponding author} \\
  Department of Mathematics and Statistics\\
  IIT Kanpur\\
  \texttt{dootika@iitk.ac.in} \\
}
\begin{document}
\maketitle
\onehalfspacing
\begin{abstract}
 Modern computational advances have enabled easy parallel implementations of Markov chain Monte Carlo (MCMC). However, almost all work in estimating the variance of Monte Carlo averages, including the efficient batch means (BM) estimator, focuses on a single-chain MCMC run. We demonstrate that simply averaging covariance matrix estimators from multiple chains can yield critical underestimates in small Monte Carlo sample sizes, especially for slow-mixing Markov chains.  We extend the work of \cite{arg:and:2006} and propose a multivariate replicated batch means (RBM) estimator that utilizes information from parallel chains, thereby correcting for the underestimation. Under weak conditions on the mixing rate of the process, RBM is strongly consistent and exhibits similar large-sample bias and variance to the BM estimator. We also exhibit superior theoretical properties of RBM by showing that the (negative) bias in the RBM estimator is less than the average BM estimator in the presence of positive correlation in MCMC. Consequently, in small runs, the RBM estimator can be dramatically superior and this is demonstrated through a variety of examples.

\end{abstract}

% keywords can be removed
% \keywords{Markov chain Monte Carlo \and Replicated batch means \and Operations research}
\section{Introduction}

Markov chain Monte Carlo (MCMC) algorithms have emerged as an essential tool for Bayesian inference when obtaining independent samples from a posterior is either inefficient or impossible. With increased computational power and the presence of multiple cores on personal computers, running parallel Markov chains dispersed over the state space is common for improved inference. On the other hand, output analysis in MCMC has predominantly focused on single-chain MCMC output with the understanding that multiple-chain analysis can follow from a simple averaging of information. We demonstrate that a simple averaging of the Monte Carlo limiting covariance matrix can yield undesirable results, warranting specific methods for multiple-chain output analysis. 
% This problem has been largely unacknowledged in the literature and is a root cause for unreliable output analysis by early termination of the process.

Let $F$ be the target distribution defined on a $d$-dimensional space $\X$, equipped with a countably generated $\sigma$-field. Let function $h:\X \to \mathbb{R}^p$ be such that $\mu = \int_{\X} h(x) F(dx)$ is a quantity of interest and $m$ independent $F$-ergodic Markov chains $\{X_{1,t}\}_{t\geq 1}, \dots, \{X_{m,t}\}_{t\geq 1}$ are simulated for this task. The Monte Carlo estimator of $\mu$ from the $k$th Markov chain satisfies
\[
\hat{\mu}_{k} := \dfrac{1}{n} \ds \sum_{t=1}^{n} h\left(X_{k,t} \right) \overset{a.s.}{\to} \mu \text{ as } n \to \infty\,,
\]
where the convergence holds for any starting value $X_{k,1}$. Consequently, the combined estimator of $\mu$ from the $m$ independent chains is $\hat{\mu} = m^{-1} \sum_{k=1}^{m} \hat{\mu}_k\,.$
% \[
% \hat{\mu} = \dfrac{1}{m} \ds \sum_{k=1}^{m} \hat{\mu}_k\,.
% \]
If a Markov chain central limit theorem (CLT) holds, then irrespective of the starting value, there exists a $p \times p$ positive-definite matrix $\Sigma$ such that for all $k = 1, \dots, m$,  as $n \to \infty$,
\begin{equation*}
\label{eq:multi_CLT_each}
\sqrt{n}(\hat{\mu}_k - \mu) \overset{d}{\to} N_p(0, \Sigma)\,.
\end{equation*}
Consequently, a CLT holds for $\hat{\mu}$ as well so that as $n \to \infty$,
\begin{equation}
\label{eq:multi_CLT}
\sqrt{n}(\hat{\mu} - \mu) \overset{d}{\to} N_p \left(0, \dfrac{\Sigma}{m} \right)\,.	
\end{equation}
A significant goal in MCMC output analysis is estimating $\Sigma$ in order to ascertain the quality of estimation of $\mu$ via $\hat{\mu}$. This allows for determining when to stop the MCMC sampler \citep{bro:gel:1998,fleg:hara:jone:2008,gelm:rubi:1992a,gong:fleg:2016,roy:2019,vats:fleg:jon:2019,vehtari2020rank}. Additionally, when used with sequential stopping rules, estimators of $\Sigma$ must be (strongly) consistent in order to yield asymptotic nominal coverage \citep{glyn:whit:1992}. Amongst the list of estimators, only the estimator of \cite{broo:robe:1998}, which estimates $\Sigma$ by the sample covariance matrix of $\hat{\mu}_1, \dots, \hat{\mu}_m$,  is specifically constructed for multiple-chains. We show that this estimator is not consistent in $n$ and as expected exhibits high variability when the number of parallel implementations is small. 

Many estimators have been introduced for a single-chain implementation, including batch means (BM) estimators \citep{chen:seila:1987}, spectral variance estimators \citep{andr:1991,hannan:1970}, regenerative estimators \citep{seil:1982}, and initial sequence estimators \citep{berg2022efficient,dai:jon:2017,koso:2000}. All of the estimators have their pros and cons, and \cite{vats:rob:fle:jon:2020} recommend the BM estimator due to its ease of implementation, universality of application, computational efficiency, and well established asymptotic properties.

Let $\hat{\Sigma}_k$ denote the BM estimator of $\Sigma$ from the $k$th chain; the estimator is detailed in Section~\ref{sec:BM_estimators}. A default combined estimator of $\Sigma$, which we call the average batch means (ABM) estimator is, $\hat{\Sigma} = m^{-1} \sum_{k=1}^{m} \hat{\Sigma}_{k}\,.$
% \[
% \hat{\Sigma} = \dfrac{1}{m} \ds \sum_{k=1}^{m} \hat{\Sigma}_{k}\,.
% \]
ABM naturally retains asymptotic properties of $\hat{\Sigma}_k$, since the $m$ chains are independent. This averaging of the variance is reasonable for independent and identically distributed samples. However, even under stationarity, before each chain has adequately explored the state space, chains may dominate different parts of $\X$, so that each $\hat{\Sigma}_{k}$ significantly underestimates the truth. As a result, the ABM estimator also underestimates the truth and many of the benefits of running parallel chains from dispersed starting values are lost.

In the steady-state literature for univariate $\phi$-mixing processes ($p = 1$), \cite{arg:and:2006} introduced a \textit{replicated batch means} (RBM) estimator that estimates $\Sigma$ by essentially centering the chain around the overall mean $\hat{\mu}$, instead of the $k$th chain mean, $\hat{\mu}_k$. In most practical MCMC applications, $h$ is multivariate, and \cite{arg:and:2006} does not apply. Further, only a handful of Markov chains are known to be $\phi$-mixing. Thus, the RBM estimator presented in \cite{arg:and:2006} is not immediately applicable to MCMC. We develop a \textit{multivariate RBM} estimator and obtain (strong) consistency and large-sample bias and variance results for the larger class of $\alpha$-mixing processes. Further, we also show that RBM exhibits a smaller bias in finite samples compared to ABM for Markov chain exhibiting positive correlation. 
% Due to the Markov chain strong law, $\hat{\mu}_k - \hat{\mu} \to 0$ with probability 1 as $n \to \infty$, thus the RBM estimator shares asymptotic properties with the ABM estimator. That is, if the original batch means estimator is (strongly) consistent, then the RBM estimator is strongly consistent. Further, we show that large-sample bias and variance expressions for the RBM estimator mimic those of the single-chain estimator.

Asymptotically, multivariate RBM and ABM are no different. However, since deviation is measured from the global mean, if the chains explore different parts of the state space, RBM yields dramatic improvements over ABM.  These finite-sample advantages over ABM are illustrated through three examples: a Gibbs sampler for a bivariate Gaussian target, a Metropolis-Hastings sampler for the tricky Rosenbrock distribution, and a Hamiltonian Monte Carlo sampler for a Bayesian neural network posterior.  
 Over all three examples, the results are consistent: when dealing with slow mixing Markov chains or small Monte Carlo sample sizes, RBM can far outperform ABM, yielding a significantly more reliable estimator of $\Sigma$. For fast-mixing chains, the RBM estimator is at least as good as the ABM estimator with no additional computation cost.

The rest of the paper is organized as follows. In Section~\ref{sec:BM_estimators}, we present the existing single-chain BM estimators, and then outline the proposed multivariate RBM estimator. We also highlight other multiple-chain estimators of $\Sigma$ that we will employ for comparisons with the RBM estimator. The theoretical properties of the RBM estimator are outlined in Section~\ref{subsec:main_results}, and implementation in a variety of examples is shown in Section~\ref{sec:examples}.

\section{Batch-means estimators}
\label{sec:BM_estimators}
Recall that $\{X_{k,t}\}_{t=1}^{n}$ denotes the $k$th Markov chain of Monte Carlo sample size $n$ and define $\{Y_{k,t}\}_{t=1}^{n} = \{ h(X_{k,t})\}_{t=1}^{n}$. Let $n = a_n b_n$, where both $a_n, b_n \in \mathbb{N}$, $b_n$ is the size of each batch, and $a_n$ denotes the number of batches.  Let $\bar{Y}_{k,l}$ denote the mean vector for the $l$th batch in the $k$th chain. That is,
\[
\bar{Y}_{k,l} = \dfrac{1}{b_n} \ds \sum_{t=1}^{b_n} Y_{k,(l-1)b_n + t} \qquad l = 1, \dots, a_n\,.
\]
% The values of $a_n$ and $b_n$ are chosen so that they satisfy the following assumptions.
%
For the $k$th chain, \cite{chen:seila:1987} define the multivariate BM estimator of $\Sigma$ as
\begin{equation}
\label{eq:singe_bm}
  \hat{\Sigma}_{k,b_n} = \dfrac{b_n}{a_n - 1}\sum_{l = 1}^{a_n}\left(\bar{Y}_{k,l} - \hat{\mu}_k\right)\left(\bar{Y}_{k,l} - \hat{\mu}_k\right)^{T} \,.
\end{equation}
BM estimators have been used extensively in steady-state simulation and MCMC as they are computationally adept at handling the typical high-dimensional output generated by simulations. In the context of MCMC, desirable asymptotic properties of BM estimators are well established; \cite{jone:hara:caff:neat:2006,vats:fleg:jon:2019} provide necessary conditions for strong consistency, \cite{fleg:jone:2010,vats:flegal:2018} show mean-square-consistency, and \cite{chak:khare:2019} demonstrate asymptotic normality of the univariate BM estimator.

Despite desirable theoretical properties, the traditional BM estimator is inclined to exhibit finite-sample negative bias \citep[see][for e.g.]{damerdji:1995}, particularly for slow mixing Markov chains. To overcome this, \cite{liu:fleg:2018,vats:flegal:2018} propose the lugsail BM estimator that uses a jackknife-like trick to offset the negative bias in the opposite direction. For $r \geq 1$ and $0 \leq c < 1$, the lugsail BM estimator for the $k$th chain is:
\[
\hat{\Sigma}_{k,L} = \dfrac{1}{1 - c}\hat{\Sigma}_{k,b_n} - \dfrac{c}{1 - c}\hat{\Sigma}_{k, \lfloor b_n/r \rfloor} \; ,
\]
where $\hat{\Sigma}_{k, \lfloor b_n/r \rfloor}$ is the BM estimator with batch size $\lfloor b_n/r \rfloor$. Setting $r = 1$ yields the usual BM estimator in \eqref{eq:singe_bm}. \cite{vats:flegal:2018} recommend $c = 1/2$ and $r = 3$ when the underlying Markov chain is slowly mixing. 

\subsection{Multivariate replicated batch-means estimator} % (fold)
\label{sub:replicated_batch_means_estimator}

The traditional batch-means estimator is built for a single-run Markov chain. As will be demonstrated in the examples, in the event of slow mixing Markov chains, a simple average of the BM estimators can yield inaccurate results.  \cite{arg:and:2006} tackled this problem for the univariate case by constructing replicated batch means estimator for $\phi$-mixing processes by essentially concatenating the $m$ Markov chains. We develop a more general, multivariate version of their estimator and establish desirable theoretical properties under realistic conditions on the Markov chain. For a batch size of $b_n$,  the multivariate RBM estimator is
\begin{equation}
\label{eq:RBM}
    \hat{\Sigma}_{R, b_n} = \dfrac{b_n}{a_n m-1}\sum_{k=1}^{m}\sum_{l=1}^{a_n}\left(\bar{Y}_{k,l} - \hat{\mu}\right)\left(\bar{Y}_{k,l} - \hat{\mu} \right)^T \; .
\end{equation}
The idea behind RBM is to pool all the $a_nm$ batch mean vectors and measure their variability from the overall mean, $\hat{\mu}$. Thus, if each Markov chain is in different parts of the state space due to slow mixing,  $\hat{\Sigma}_{R, b_n}$ acknowledges the high variability in the batch mean vectors. 
% The same cannot be said of a simple averaging of the batch means estimator from each chain. 
%
The regular RBM estimator inherits the negative bias of the standard single-chain BM estimator, particularly in the presence of high autocorrelation. Fortunately, a lugsail RBM estimator is easily implementable:
\[
\hat{\Sigma}_{R,L} = \dfrac{1}{1 - c} \hat{\Sigma}_{R,b_n} - \dfrac{c}{1 - c} \hat{\Sigma}_{R, \lfloor b_n/r \rfloor} \; ,
\]

In Section~\ref{subsec:main_results} we present our main results on the RBM estimator. Although, much of our focus is on MCMC, our results apply more generally to $\alpha$-mixing processes.  For a stationary stochastic process $\{S_{n}\}$ on a probability space $\left(\Omega, \mathbb{F}, P\right)$, set $\mathbb{F}_{s}^{l} = \sigma\left(S_{s},...,S_{l}\right)$. Define the $\alpha$-mixing coefficients for $n = 1,2,...$ as 
\[
\alpha(n) = \underset{s \geq 1}{\sup}\underset{A\in \mathbb{F}_{1}^{s}, B\in \mathbb{F}_{s+n}^{\infty}}{\sup} |P\left(A \cap B\right) - P\left(A\right)P\left(B\right) | \; .
\]
A process is $\alpha$-mixing (or strongly mixing) if $\alpha(n) \to 0$ as $n \to \infty$. Erogidc Markov chains employed in MCMC are $\alpha$-mixing by design, however their rate is of critical importance. Our first assumption is on the mixing rate of $\{X_{k,t}\}$ and the moments of $\{Y_{k,t}\}  = \{h(X_{k,t})\}$, for a function $h:\X \to \mathbb{R}^p$.
\begin{ass}
\label{ass:mixing}
For some $\delta > 0$ and $\epsilon > 0$ such that $\E_F\|Y_{k,1}\|^{2 + \delta + \epsilon} < \infty$ and $\{X_{k,t}\}$ is $\alpha$-mixing with $\alpha(n) = o\left(n^{-\left(2 + \delta\right)\left(1 + (2  +\delta)/\epsilon \right)}\right)$ for all $k = 1, \dots, m$.
\end{ass}

Assumption~\ref{ass:mixing} will be our umbrella assumption on the process, however, we bring special focus to when this assumption is satisfied by Markov chains.  
Let $K: \X \times \B(\X) \to [0,1]$ be an $F$-invariant Markov chain transition kernel, where for $x \in \X$ and $A \in \B(\X)$, $K(x, A) := \Pr(X_2 \in A |X_1 = x)$. The $n$-step transition is $K^n(x,A) = \Pr(X_{n+1} \in A \mid X_1 = x)$. Standard MCMC algorithms are constructed so that $K$ is $F$-Harris ergodic, which guarantees ergodicity and implies that, in addition to convergence to the target distribution $F$, the Markov chain is $\alpha$-mixing \citep{jone:2004}.  Assumption~\ref{ass:mixing} may hold when the Markov chain is \textit{polynomially ergodic}. That is, suppose there exists $M:\X \to [0, \infty)$ with $\E_F M < \infty$ and $\xi > 0$ such that
\[
\|K^n(x,\cdot) - F(\cdot) \|_{TV} \leq M(x) n^{-\xi}\,,
\]
where $\|\cdot\|_{TV}$ is the total variation norm. Polynomially ergodic Markov chains of order $\xi > \left(2 + \epsilon\right)\left(1 + (2 + \delta)/\epsilon \right)$ satisfy Assumption~\ref{ass:mixing} \citep{jone:2004}.

% subsection replicated_batch_means_estimator (end)

\subsection{Other multiple chain estimators} % (fold)
\label{sub:other_multiple_chain_estimators}

% subsection other_multiple_chain_estimators (end)
Despite the rich and thorough literature on the estimation of $\Sigma$, there has been little work done for the case when parallel chains are employed. 
The only known estimator that utilizes this feature is the \textit{naive} estimator used in the convergence diagnostic of \cite{bro:gel:1998}, which is obtained by taking the sample covariance of the $m$ sample means $\hat{\mu}_k$. That is,
\begin{equation}
\label{eq:naive}
  \hat{\Sigma}_{N} := \dfrac{n}{m-1}\sum_{k=1}^{m}\left(\hat{\mu}_k - \hat{\mu}\right)\left(\hat{\mu}_k - \hat{\mu}\right)^{T}\;.
\end{equation}
In principle, $\hat{\Sigma}_{N}$ would perform well if the number of parallel runs, $m$, was reasonably large. In fact, if the asymptotics are in both $m$ and $n$, $\hat{\Sigma}_N$ is a viable estimator. However, typically, the number of parallel implementations of a Markov chain is small, and the naive estimator has a high variance even for large $n$. The following theorem shows that $\hat{\Sigma}_N$ is not consistent for $\Sigma$ for fixed $m$; the proof is in Appendix \ref{subsec: naive_convergence}.
\begin{theorem}
\label{thm:naive_convergence}
  Under Assumption~\ref{ass:mixing}, $\hat{\Sigma}_N \overset{d}{\to} (m-1)^{-1} \text{Wishart}_p(\Sigma, m-1)$ as $n \to \infty$. 
\end{theorem}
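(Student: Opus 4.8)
The plan is to reduce the statement to the classical fact that the centered sample cross-product matrix of i.i.d.\ multivariate normals is Wishart, and to transport the weak convergence through the continuous mapping theorem. First I would record the joint limiting behavior of the $m$ chain means. Each chain satisfies the marginal CLT $\sqrt{n}(\hat{\mu}_k - \mu)\overset{d}{\to} N_p(0,\Sigma)$, which itself follows from the Kuelbs--Philipp strong invariance principle quoted above, and the $m$ chains are independent. Hence, writing $W_{k} := \sqrt{n}(\hat{\mu}_k - \mu)$, the stacked vector $(W_1,\dots,W_m)$ converges in distribution to $(Z_1,\dots,Z_m)$, where $Z_1,\dots,Z_m$ are i.i.d.\ $N_p(0,\Sigma)$; joint convergence is immediate from the marginals because independence factors the characteristic function.

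Second, I would rewrite $\hat{\Sigma}_N$ so that the normalization $n$ is absorbed and the unknown $\mu$ cancels. Since $\hat{\mu} = m^{-1}\sum_{k} \hat{\mu}_k$, we have $\hat{\mu}_k - \hat{\mu} = n^{-1/2}(W_k - \bar{W})$ with $\bar{W} = m^{-1}\sum_{k} W_k$, and therefore
\[
\hat{\Sigma}_N = \frac{n}{m-1}\sum_{k=1}^m \left(\hat{\mu}_k - \hat{\mu}\right)\left(\hat{\mu}_k - \hat{\mu}\right)^T = \frac{1}{m-1}\sum_{k=1}^m \left(W_k - \bar{W}\right)\left(W_k - \bar{W}\right)^T .
\]
The map $(w_1,\dots,w_m)\mapsto \sum_{k} (w_k - \bar{w})(w_k - \bar{w})^T$ is continuous on $(\real^{p})^m$, so the continuous mapping theorem applied to the convergence from the first step gives $\hat{\Sigma}_N \overset{d}{\to} (m-1)^{-1}\sum_{k} (Z_k - \bar{Z})(Z_k - \bar{Z})^T$.

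Finally, I would identify the limit. Because $Z_1,\dots,Z_m$ are i.i.d.\ $N_p(0,\Sigma)$, the centered cross-product sum $\sum_{k} (Z_k - \bar{Z})(Z_k - \bar{Z})^T$ is exactly $\text{Wishart}_p(\Sigma, m-1)$ by the standard Gaussian sampling theorem, the loss of one degree of freedom arising from centering at $\bar{Z}$ (e.g.\ via Cochran's theorem). This yields $\hat{\Sigma}_N \overset{d}{\to} (m-1)^{-1}\text{Wishart}_p(\Sigma, m-1)$, as claimed.

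Since every ingredient is either assumed in the paper or classical, there is no serious obstacle. The only points demanding care are establishing the joint (rather than merely marginal) weak convergence of the chain means, which I would justify cleanly through independence and factorization of characteristic functions, and the elementary bookkeeping showing that the $\sqrt{n}$ scaling inside $W_k$ and the prefactor $n$ cancel exactly, so that no rate condition on the batch size $b$ enters this result.
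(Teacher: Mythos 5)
Your proposal is correct and follows essentially the same route as the paper: the same algebraic identity $\sum_k(\hat{\mu}_k-\hat{\mu})(\hat{\mu}_k-\hat{\mu})^T=\sum_k(\hat{\mu}_k-\mu)(\hat{\mu}_k-\mu)^T-m(\hat{\mu}-\mu)(\hat{\mu}-\mu)^T$ (the paper's Lemma~\ref{lem:Y_diff_simpli}), the chain-wise CLTs, and identification of the Gaussian sample cross-product limit as Wishart. If anything, your explicit use of joint weak convergence of $(W_1,\dots,W_m)$ plus the continuous mapping theorem is a cleaner justification of the final step than the paper's brief appeal to ``a standard argument of asymptotic independence.''
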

 We will demonstrate the instability of $\hat{\Sigma}_N$ in Section~\ref{sec:examples}, mimicking what was witnessed by \cite{fleg:hara:jone:2008}.

% \subsection{Other multiple chain estimators}
% \label{sec:other}

% \subsubsection{Average batch means}
% \label{sec:abm}

Given the lack of literature on estimators of $\Sigma$ under the parallel chain regime, we discuss a possible default estimator. A natural way to combine variance estimators from $m$ chains is to average them. This yields a lugsail ABM estimator to be:
\begin{equation}
\label{eq:abm}
 \hat{\Sigma}_{A,L} = \dfrac{1}{m}\sum_{k=1}^{m}\hat{\Sigma}_{k,L}\,. 
\end{equation}
%
% The lugsail variant of the ABM estimator, denoted by $\hat{\Sigma}_{A,L}$ is 
% \[
% \hat{\Sigma}_{A,L} = \dfrac{1}{1 - c}\hat{\Sigma}_{A,b_n} - \dfrac{c}{1 - c}\hat{\Sigma}_{A, \lfloor b/r \rfloor} \; .
% \]
Since the $m$ Markov chains are independent, large sample properties of $\hat{\Sigma}_{k,L}$ are shared by $\hat{\Sigma}_{A,L}$. This includes strong consistency, which was discussed in \cite{vats:fleg:jon:2019}, a CLT analogue for the univariate estimator \citep{chak:khare:2019},  and large-sample bias and variance results \citep{fleg:jone:2010,vats:flegal:2018}. 
% {\color{blue} Should we include a citation for BM large-sample bias and variance results?}.

% \begin{theorem}
%   \label{thm:bias_naive}
% The bias for the naive estimator is
% \[
% \text{Bias}(\hat{\Sigma}_N) =  \dfrac{\Gamma}{n} + o(n^{-1})\,.
% \]
% \end{theorem}
% \begin{proof}
%   From Lemma~\ref{lem:Y_diff_simpli}
% \[
%   \hat{\Sigma}_{N} = \dfrac{n}{m-1} \left[\ds \sum_{k=1}^{m} (\hat{\mu}_k - \mu)(\hat{\mu}_k - \mu)^T - m (\hat{\mu} - \mu)(\hat{\mu} - \mu)^T \right]\,.
% \]
% Taking expectation and following the steps of \cite[Proposition 1]{song:schm:1995}
% \begin{align*}
%     \E\left[\hat{\Sigma}_{N}\right] & = \dfrac{n}{m-1} \ds \sum_{k=1}^{m} \E \left[ (\hat{\mu}_k - \mu)(\hat{\mu}_k - \mu)^T\right] - \dfrac{nm}{m-1} \E \left[ (\hat{\mu} - \mu)(\hat{\mu} - \mu)^T \right] \\
%     & = \dfrac{nm}{m-1} \Var\left(\bar{Y}_1\right) - \dfrac{nm}{m-1} \Var\left(\hat{\mu}\right) \\ 
%     & = \dfrac{nm}{m-1} \left[\Var(\bar{Y}_1) - \dfrac{1}{m} \Var(\bar{Y}_1) \right]\\
%     & = n\Var(\bar{Y}_1) \\ 
%     & =  {\Sigma} + \dfrac{\Gamma}{n} + o(n^{-1})\,.
% \end{align*}
% \end{proof}

\subsection{Main results} % (fold)
\label{subsec:main_results}

In this section, we establish some critical results for $\hat{\Sigma}_{R}$ and $\hat{\Sigma}_{R,L}$. Particularly, interest is in ensuring that  estimators of $\Sigma$  (i) are strongly consistent to allow validity of sequential stopping rules \citep{glyn:whit:1992}, (ii) exhibit a tractable form of large-sample bias, (iii) exhibit manageable variance in estimation. As is common in the literature, we will require the following result that establishes the existence of a strong invariance principle under Assumption~\ref{ass:mixing}. Let ${B\left(t\right)}$ be a $p$-dimensional standard Brownian motion.
%
% 
% 
% Further, the following theorem explains that Assumption~\ref{ass:mixing} implies the existence of a strong invariance principle. 

% {\color{red} Change below to Banerjee + Vats result.}
% \begin{theorem}[\cite{kuel:phil:1980}]
% \label{thm:SIP}
% Let Assumption~\ref{ass:mixing} hold. Then there exists a $p \times p$ lower triangular matrix $L$ such that $LL^T = \Sigma$,  a non-negative increasing function $\psi(n) = n^{1/2 - \lambda}$ for some $\lambda > 0$, a finite random variable D, and a sufficiently rich probability space $\Omega$ such that for almost all $w \in \Omega$ such that for all $n > n_{0}$, with probability 1,
% \[
% \left\|\sum_{t=1}^{n}Y_{k,t} - n \mu - LB\left(n\right)\right\| < D(w)\psi(n) \;.
% \]
% \end{theorem}

\begin{theorem}[\cite{banerjee2022multivariate}]
\label{thm:SIP}
Let Assumption~\ref{ass:mixing} hold. Then there exists a $p \times p$ lower triangular matrix $L$ such that $LL^T = \Sigma$, a finite random variable D, and a sufficiently rich probability space $\Omega$ such that for almost all $w \in \Omega$ and for all $n > n_{0}$, with probability 1,
\[
\left\|\sum_{t=1}^{n}Y_{k,t} - n \mu - LB\left(n\right)\right\| < D(w) \psi(n)\;,
\] 
where $\psi(n) = n^{\max\{1/(2 + \delta), 1/4\}} \log(n)$.
\end{theorem}

\begin{ass}
\label{ass:bn}
  The integer sequence $b_n$ is such that $b_n \to \infty$ and $b_n/n \to 0$ as $n \to \infty$, and both $b_n$ and $n/b_n$ are non decreasing.
\end{ass}
Assumption~\ref{ass:bn} is necessary for (weak) consistency and mean-square consistency of BM estimators \citep{glyn:whit:1991,damerdji:1995}. Additional assumptions on $a_n$ and $b_n$ may be needed for additional results. Our first main result, the proof for which is in Appendix~\ref{subsec: strong_consistency_RBM}, demonstrates strong consistency of the lugsail RBM estimator.  The following theorem establishes that if the single-chain BM estimators are strongly consistent, then the RBM estimator is strongly consistent. 
\begin{theorem}
\label{thm:RBM_consistency}
  If Assumptions~\ref{ass:mixing} and \ref{ass:bn} hold, $\hat{\Sigma}_{k,b_n} \overset{a.s.}{\to} \Sigma$, and $\left(b_n\log \log n\right)/n \to 0$, then $\hat{\Sigma}_{R,L}\overset{a.s.}{\to} \Sigma$ as $n \to \infty$.
\end{theorem}

The strong consistency result in Theorem~\ref{thm:RBM_consistency} guarantees that sequential stopping rules of the style of \cite{glyn:whit:1992} lead to asymptotically valid confidence regions. Specifically for MCMC, Theorem~\ref{thm:RBM_consistency} allows the usage of effective sample size as a valid termination rule. We will discuss the impact of this in Section~\ref{sec:examples}.

Our next set of results focus on the bias and variance of the RBM estimator. Obtaining finite-time results on the bias and variance of BM estimators is challenging due to the underlying correlation in the mixing processes. However, large-sample results are possible, and we study these here.  Let $\hat{\Sigma}^{ij}_{R,L}$ and $\Sigma_{ij}$ be the $ij$th element of $\hat{\Sigma}_{R,L}$ and $\Sigma$, respectively. Let $R_s = \Cov_F(Y_{1,1}, Y_{1, 1+s})$ denote the $s$-lag covariance matrix. Denote
\[
\Gamma = - \sum_{s=1}^{\infty}s\left[R_s + R_s^T\right]\,.
\]

The proofs of the following two theorems are in Appendix~\ref{subsec:bias_RBM} and Appendix~\ref{subsec:variance_RBM}.
\begin{theorem}
\label{thm:rbm_bias}

If Assumption~\ref{ass:mixing} holds,
\[
\text{Bias}\left(\hat{\Sigma}_{R,b_n} \right) = \left(\dfrac{1-rc}{1-c} \right)\dfrac{\Gamma}{b_n} + o\left(\dfrac{1}{b_n}\right) \; .
\]
Further, under Assumption~\ref{ass:bn}, $\hat{\Sigma}_{R,b_n}$ is asymptotically unbiased. 
\end{theorem}

\begin{theorem}
\label{thm:rbm _variance}
Using the notation from Theorem \ref{thm:SIP}, if Assumptions~\ref{ass:mixing} and \ref{ass:bn} hold, $\E D^{4} < \infty$, $E_{F}\|Y_{1,1}\|^{4} < \infty$,  and $b^{-1}_n\psi^{2}(n)\log n \to 0$ as $n \to \infty$, then
\[
Var\left(\hat{\Sigma}^{ij}_{R,L}\right) = \dfrac{b_n}{nm}\left[\left(\dfrac{1}{r} + \dfrac{r - 1}{r\left(1 - c\right)^{2}}\right)\left(\Sigma^{2}_{ij} + \Sigma_{ii}\Sigma_{jj}\right)\right] + o\left(\dfrac{b_n}{n}\right) \; .
\]
\end{theorem}
Theorems~\ref{thm:rbm_bias} and \ref{thm:rbm _variance} together yield mean-square-consistency of the RBM estimator. The bias and variance expressions are essentially similar to the single-chain BM results. Thus, for large $n$, there is essentially no difference between ABM and RBM; this is also verified in all of our examples. However, and most critically, for small $n$ and particularly when any of the $m$ Markov chains have not sufficiently explored the state space, the RBM estimator exhibits far superior finite sample properties. We show that the first-order bias in RBM is much smaller than the first-order bias in ABM, establishing the systemic superiority of RBM which is illustrated in all of our examples in Section~\ref{sec:examples}.

\begin{theorem}
\label{thm:compare_bias}
Under Assumption~\ref{ass:mixing}, for each $k = 1, \dots, p$, the difference in the bias of RBM and ABM estimators is
\[
\E\left[\hat{\Sigma}_{R,b_n}^{kk} - \hat{\Sigma}_{A,b_n}^{kk} \right]= \dfrac{2a_n(m-1)}{a_nm-1} \left[\sum_{s=1}^{b_n-1} \dfrac{s}{n}R_{s}^{kk} + \dfrac{1}{a_n-1}\sum_{s=b_n}^{n-1}\left(1 - \dfrac{s}{n}\right) R_{s}^{kk} \right]\,.
\]

\end{theorem}
\begin{remark}
Typically, an MCMC algorithm exhibits positive autocorrelation,  $R_{s}^{kk} > 0$  for all $s$. Theorem~\ref{thm:rbm_bias} concludes that when in this scenario, these estimators exhibit negative bias. Theorem~\ref{thm:compare_bias} further elucidates that on average, RBM yields larger estimates than ABM, particularly for slow mixing chains, thereby reducing the underestimation.  Notice that when $m = 1$, the difference in the bias is exactly zero; this is because the ABM and RBM estimators coincide. 
\end{remark}

% Batch means estimators are underbiased, and therefore their bias terms are negative. In our proof of this theorem in Appendix \ref{subsec:rbm_vs_abm_bias}, we show that $\E \, [ \hat{\Sigma}_{R, b_n}^{kk} ] > \E  \, [ \hat{\Sigma}_{A, b_n}^{kk} ] $ for all $k$, which shows that the univariate bias for RBM is smaller in magnitude than the univariate bias for ABM. 

% section alternate_proof_for_the_variance \left(end\right)
\section{Examples}
\label{sec:examples}

\subsection{Simulation Setup} % (fold)
\label{sub:simulation_setup}

% subsection simulation_setup (end)

In each example\footnote{ \texttt{R} implementations for the examples are available here: \url{https://github.com/dvats/RBMPaperCode} }, we compare the performance of RBM against ABM and the naive estimator. We implement lugsail versions of ABM and RBM with $r = 3$ and $c = 1/2$, as recommended for MCMC by \cite{vats:flegal:2018}. For the first two applications, we start all Markov chains from stationarity by drawing the initial value from the target distribution. This ensures that there is no issue with a misdiagnosed burn-in and the numerical analysis is performed on stationary chains. For the third example where the true target is unknown, we remove the initial 1000 samples as that was around the expected time for the Markov chain to reach a local mode. 

For slow mixing Markov chains, we set the batch size to be  $b = \beta \lfloor n^{1/3} \rfloor$ where $\beta$ is estimated using the methods of \cite{fleg:jone:2010,liu2018batch}. For fast mixing Markov chains, we set $b = \lfloor n^{1/2} \rfloor$. For all examples, $h$ is the identity function. We assess the performance of the variance estimators by two metrics: (a) coverage probability (where appropriate) (b) running estimates of the determinant of the estimated covariance matrices, and  (c) quality of estimation of effective sample size (ESS). 

Using the CLT in \eqref{eq:multi_CLT}, we may construct large-sample ellipsoidal confidence regions around $\hat{\mu}$ using different variance estimators. These $(1-\alpha)100 \%$ regions are:
\[
\left \{\mu \in \real^p: n (\hat{\mu} - \mu)^T \hat{\Sigma}^{-1}(\hat{\mu} - \mu) \leq \chi^2_{1-\alpha,p} \right\}\,,
\]
where $\chi^2_{1-\alpha,p}$ is the $1-\alpha$ quantile of a $\chi^2$ distribution with $p$ degrees of freedom, and $\hat{\Sigma}$ is any of the three naive, ABM, or RBM. In the first two examples where the true $\mu$ is known, over repeated simulations we report the proportion of repetitions for which the constructed confidence regions contain the true $\mu$. Due to Theorem~\ref{thm:RBM_consistency}, RBM and ABM are both consistent and thus we expect this proportion to be close to $1-\alpha$. Since the naive estimator is inconsistent, we expect non-nominal coverage probabilities. In our first example, the true value of $\Sigma$ is known. Here, we also compare $\|\hat{\Sigma}\|_{\text{det}}$ with the true determinant of the covariance matrix.

% For the first two examples where the true target means are known, over repeated simulations, we assess coverage probabilities of confidence regions for $\mu$ when the truth is available, and also focus on running plots of the determinant, $\|\hat{\Sigma}\|_{det}$ and estimated effective sample size. 

The estimation of $\Sigma$ is most critical for the calculation of the effective sample size (ESS) \citep{gong:fleg:2016,vats:fleg:jon:2019,kass:carlin:gelman:neal:1998}. The ESS describes the number of independent samples that would yield the same quality of estimation as this sample. \cite{vats:knud:2018} establish a one-to-one relationship between the Gelman-Rubin-Brooks diagnostic \citep{gelm:rubi:1992a,broo:robe:1998} and ESS. They define ESS for multiple chains as
\[
\text{ESS} = mn \left(\dfrac{|\text{Var}_F(h(X_11))|}{|\Sigma|} \right)^{1/p}\,,
\]
where $|\cdot|$ denotes determinant. \cite{roy:2019,vats:knud:2018,vats:fleg:jon:2019} explain that MCMC simulations may be stopped with confidence when the estimated ESS is more than a pre-specified lower bound. Underestimating $\Sigma$ overestimates ESS leading to early termination of simulation and a false sense of security on the quality of estimation of $\mu$.

% The ESS running plots highlight the artificial inflation in estimated effective sample size for ABM estimators, and its slow convergence to the RBM based estimate. This can induce false confidence in the quality of samples from the chain, particularly towards the initial phases, and instruct practioners to terminate the chain early leading to inconsistent results. In addition, RBM's superior coverage probabilities highlight the severe underestimation of asymptotic covariance matrix in ABM. 

\subsection{Bivariate normal Gibbs sampler}
Consider sampling from a bivariate normal distribution using a Gibbs sampler. For $\omega_1, \omega_2 > 0$ and  $\rho$ such that $\rho^2 < \omega_1 \omega_2$, the target distribution is 
\[
\left(\begin{array}{c}
  X_1 \\ X_2
\end{array} \right) \sim 
 N \begin{pmatrix}
\begin{pmatrix}
\mu_{1}\\
\mu_{2}
\end{pmatrix}, 
\begin{pmatrix}
\omega_1 & \rho \\
\rho & \omega_2
\end{pmatrix}
\end{pmatrix}
\]
The Gibbs sampler updates the Markov chain by using the following full conditional distributions:
\begin{equation}
\label{eqn:gibbs1}
X_{1} \mid X_{2} \sim  N\left(\mu_{1} + \dfrac{\rho}{\omega_2}\left(X_{2} - \mu_{2}\right) , \omega_1 - \dfrac{\rho^{2}}{\omega_2}\right) \; .
\end{equation}
\begin{equation}
\label{eqn:gibbs2}
X_{2} \mid X_{1} \sim  N\left(\mu_{2} + \dfrac{\rho}{\omega_1}\left(X_{1} - \mu_{1}\right) , \omega_2 - \dfrac{\rho^{2}}{\omega_1}\right) \; .
\end{equation}
Although a seemingly simple example, the Gibbs sampler  can exhibit arbitrarily fast or slow mixing based on the correlation in target distribution. In fact, \cite{rob:sahu:1996} makes the case for approximating general Gibbs samplers with unimodal target distributions with a similar Gaussian Gibbs sampler. In the following theorem, we obtain the exact form of the asymptotic covariance matrix for estimating $(\mu_1, \mu_2)^T$ with the Monte Carlo average. Versions of the result are known to exist in the literature, and for completeness, we provide a proof in Appendix~\ref{subsec:Gibbs_var}.

\begin{result}
\label{thm:bivariate_gibbs}
  For the two variable deterministic scan Gibbs sampler described by the full conditionals in \eqref{eqn:gibbs1} and \eqref{eqn:gibbs2}, the asymptotic covariance matrix in the CLT for the sample mean is
\[
\Sigma = \begin{bmatrix}
\omega_1\left(\dfrac{\omega_1\omega_2 + \rho^{2}}{\omega_1 \omega_2 - \rho^{2}}\right) & \dfrac{2 \omega_1 \omega_2\rho}{\omega_1 \omega_2 - \rho^{2}}\\
\dfrac{2\omega_1 \omega_2\rho}{\omega_1\omega_2 - \rho^{2}} & \omega_2\left(\dfrac{\omega_1\omega_2 + \rho^{2}}{\omega_1\omega_2 - \rho^{2}}\right)
\end{bmatrix} \; .
\] 
\end{result}
Result~\ref{thm:bivariate_gibbs} allows us to compare our estimates of $\Sigma$ against the truth. We compare the performance of RBM, ABM, and the naive estimator for two settings, $\rho = 0.5, 0.999$ with $\omega_1 = \omega_2 = 1$. For Setting 1 ($\rho = .5$), the Markov chains travel freely across the state space so that in only a few steps, independent runs of the Markov chains look similar to each other. For Setting 2 $(\rho = .999)$, each Markov chain moves slowly across the state space. Trace plots in Figure~\ref{fig:MVG_trace} give evidence of this.
\begin{figure}[htbp]
    \centering
    \subfloat[][$\rho = 0.5$]{
         \centering
         \includegraphics[width= 2.4in]{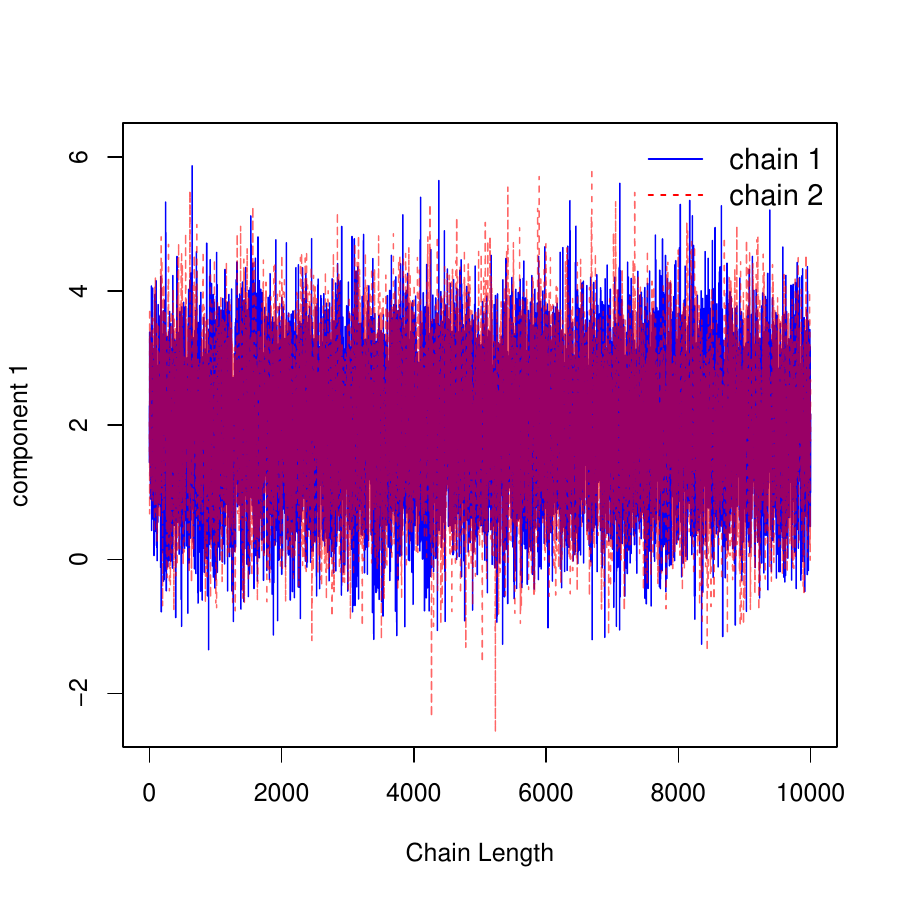} %2.4in
         \label{subfig:MVG_trace_rho_0.5}
}
    \subfloat[][$\rho = 0.999$]{
         \centering
         \includegraphics[width= 2.4in]{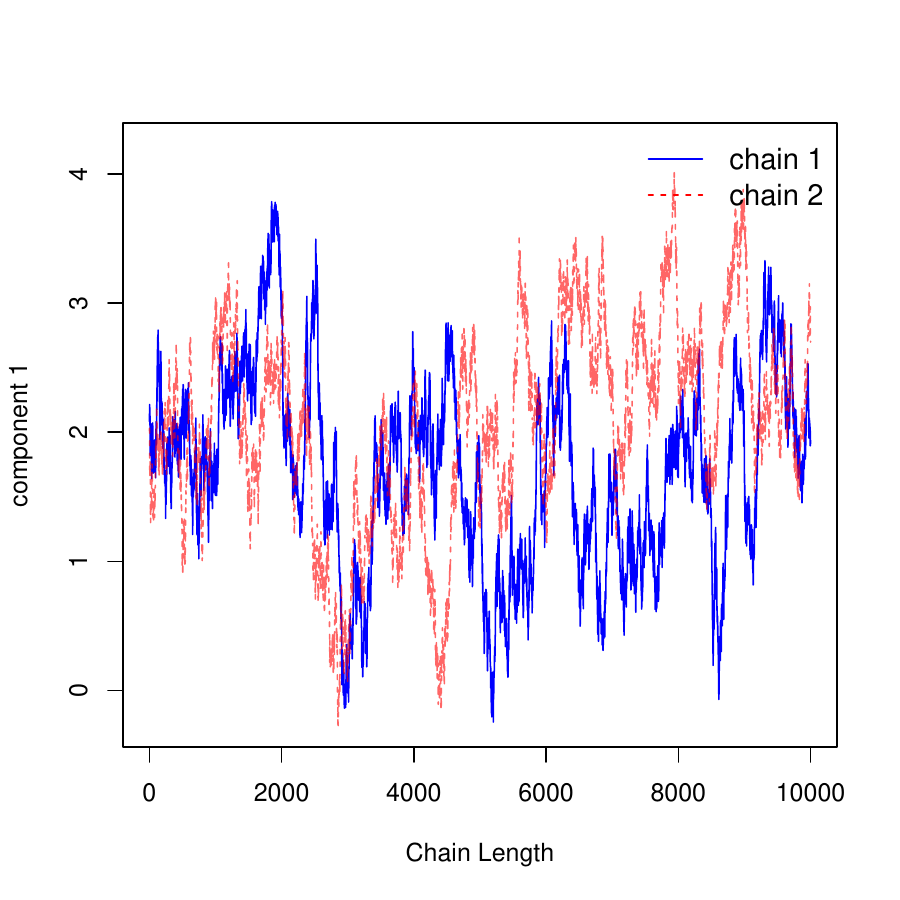}
         \label{subfig:MVG_trace_rho_0.999}
}
    \caption{Bivariate normal: Trace plots of two parallel Markov chains.}
    \label{fig:MVG_trace}
\end{figure} 

We set $m = 5,10$ for both the settings and first compare the evolution of the estimates of $\Sigma$ over time. 
 Figure~\ref{fig:MVG_FN_running} presents the running plots of the determinant of the estimators with the black horizontal line being the truth. In all four plots, the high variability of the naive estimator is demonstrated, consistent with Theorem~\ref{thm:naive_convergence}. For Setting 1, RBM and ABM are indistinguishable as was expected from the trace plots. However, the advantage of using RBM is evident in Setting 2, where the RBM estimator converges to the truth slightly quicker than the ABM estimator. This is a direct consequence of the fact that slowly mixing Markov chains take longer to traverse the state space, implying that the means of all the chains are different in small MCMC runs.

\begin{figure}[htbp]
    \centering
    \subfloat[][$m = 5, \rho = 0.5$]{
         \centering
         \includegraphics[width = 2.9in]{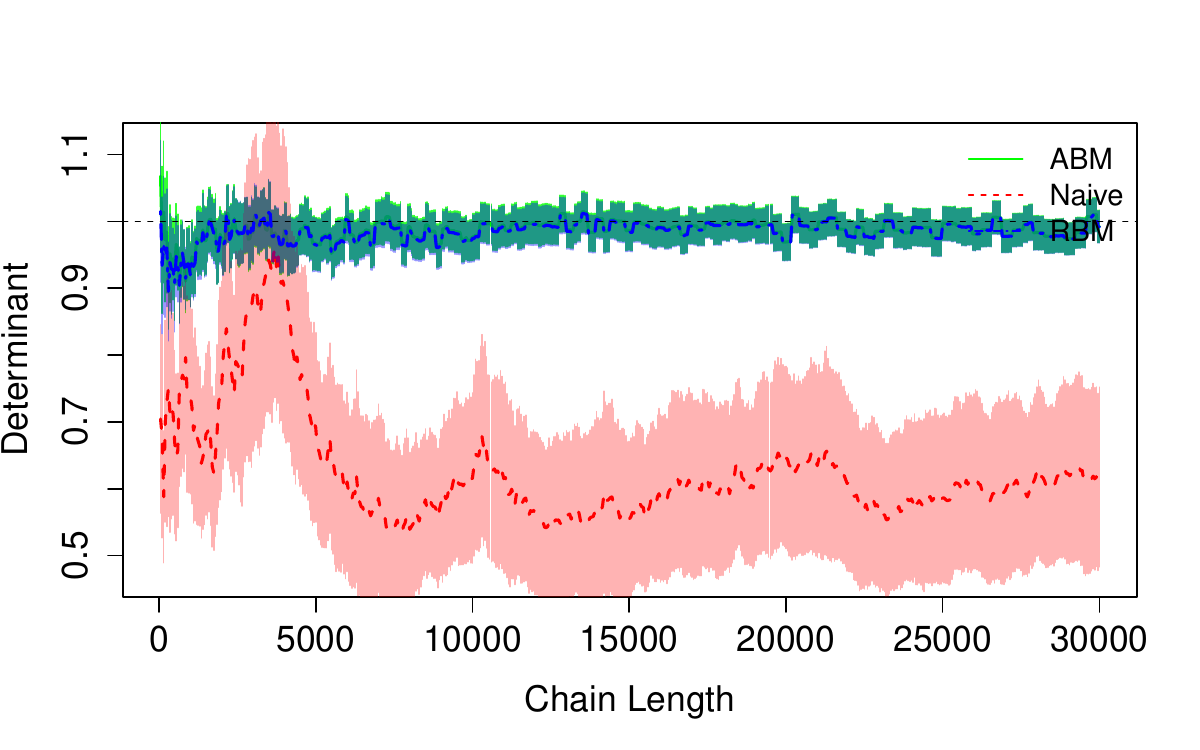}
         \label{subfig:MVG_running_rho_0.5}
}
    \subfloat[][$m = 5, \rho = 0.999$]{
         \centering
         \includegraphics[width = 2.9in]{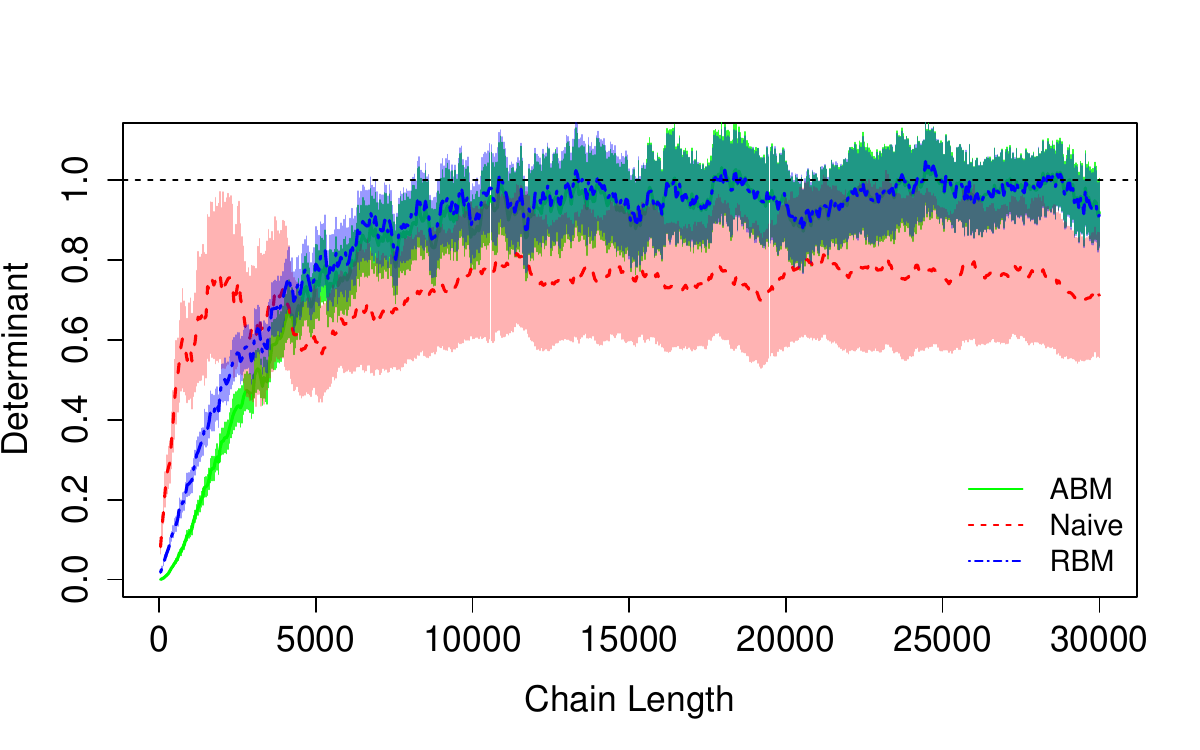}
         \label{subfig:MVG_running_rho_0.999}
}
 \\ \vspace{-1cm}
    \subfloat[][$m = 10, \rho = 0.5$]{
         \centering
         \includegraphics[width = 2.9in]{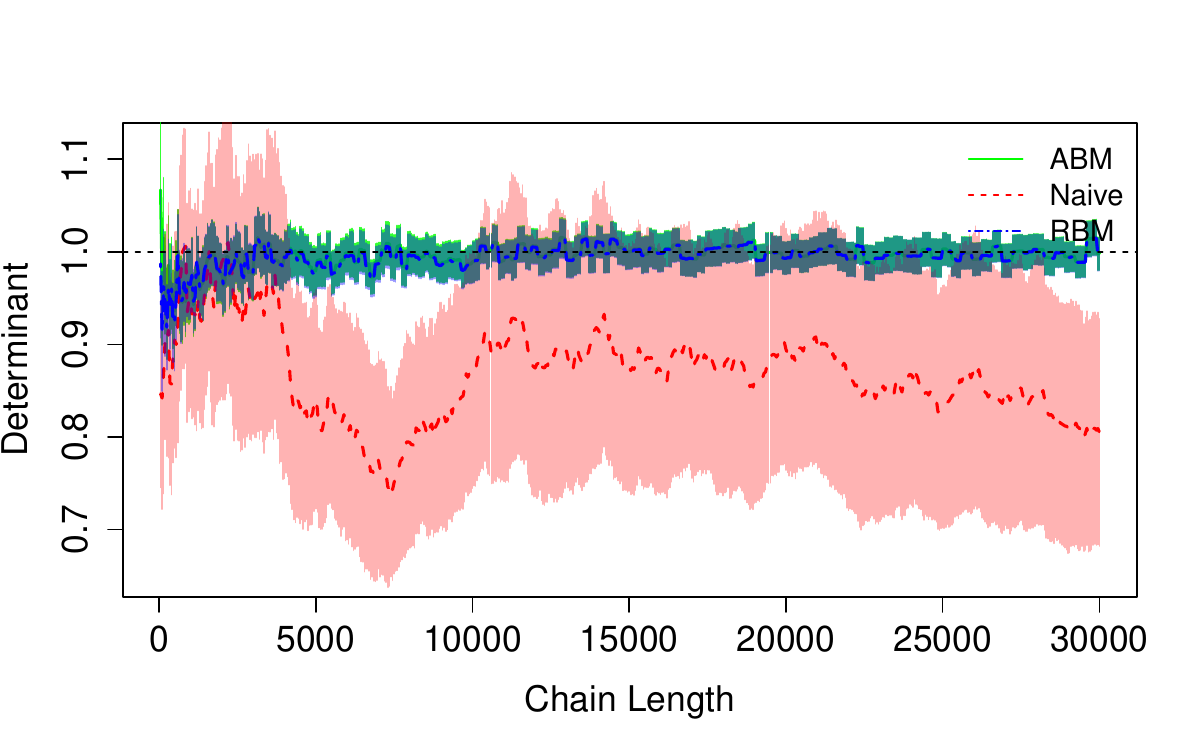}
         \label{subfig:MVG_running_m_10_rho_0.5}
}
    \subfloat[][$m = 10, \rho = 0.999$]{
         \centering
         \includegraphics[width = 2.9in]{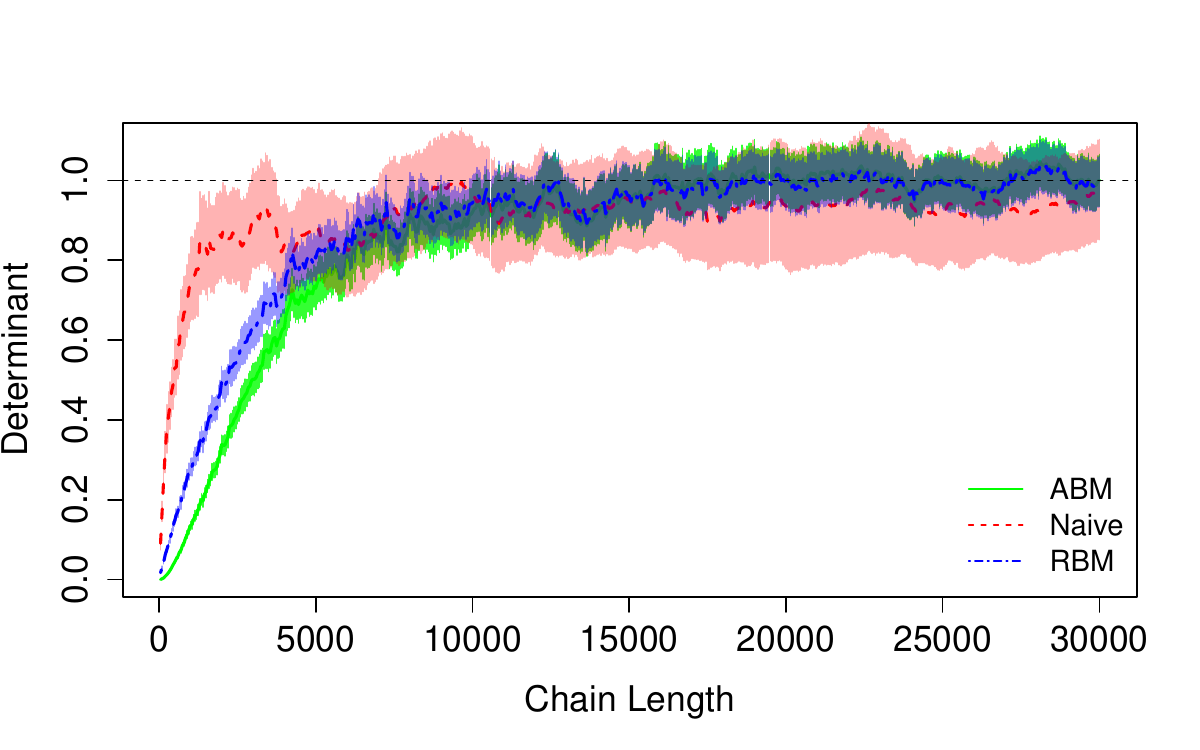}
         \label{subfig:MVG_running_m_10_rho_0.999}
}
    \caption{Bivariate normal: Determinant running plots (with standard errors) over 100 replications.}
    \label{fig:MVG_FN_running}
\end{figure}
To quantify the effect of the quality of estimation of $\Sigma$, we estimate the coverage probabilities of the resulting $95\%$ confidence regions over 1000 replications for both settings. Note that coverage probability is the proportion of confidence intervals that contain the parameter of interest. In addition to the coverage probabilities obtained using ABM, naive, and RBM, we also estimate the coverage probability using the true $\Sigma$; this, in some sense, represents the oracle. The results are presented in Table~\ref{table:MVG_coverage_m_5_10}. For Setting 1 (for both $m = 5,10$), ABM and RBM yield similar coverage, as was expected. Here, the high variability of the naive estimator impacts the coverage significantly which does not improve as $n$ increases, demonstrating the effect of Theorem~\ref{thm:naive_convergence}. For Setting 2, there is a clear separation of coverage probabilities between RBM and ABM, especially for smaller sample sizes. As the sample size increases, ABM and RBM yield more similar coverage as a consequence of less separation between the means from each chain. 
\begin{table}[htbp]
\renewcommand{\arraystretch}{1}
\centering
    \subfloat[][$m = 5, \rho = 0.5$]{
    \begin{tabular}{|c|c|c|c|c|}
    %\hline
    \hline
        $n$ & ABM & Naive & RBM & True \\
        \hline
        $5e2$ & 0.930 & 0.752 & 0.929 & 0.966 \\
        \rowcolor{Gray}
        $1e3$ & 0.944 & 0.767 & 0.947 & 0.958 \\
        $5e3$ & 0.952 & 0.756 & 0.952 & 0.957 \\
        \rowcolor{Gray}
        $3e4$ & 0.954 & 0.736 & 0.954 & 0.958 \\ \hline
    \end{tabular}         \label{subtable:MVG_coverage_rho_0.5_m_5}
    }
    \qquad
    \subfloat[][$m = 5, \rho = 0.999$]{
    \begin{tabular}{|c|c|c|c|c|} 
    \hline
        $n$ & ABM & Naive & RBM & True \\
        \hline
        $5e2$ & 0.367 & 0.755 & 0.602 & 0.945 \\
        \rowcolor{Gray}
        $1e3$ & 0.536 & 0.745 & 0.677 & 0.949 \\
        $5e3$ & 0.838 & 0.753 & 0.864 & 0.950 \\
        \rowcolor{Gray}
        $3e4$ & 0.926 & 0.755 & 0.922 & 0.954 \\ \hline
    \end{tabular}    \label{subtable:MVG_coverage_rho_0.999_m_5}
    } \\ 

    \subfloat[][$m = 10, \rho = 0.5$]{
    \begin{tabular}{|c|c|c|c|c|}
    %\hline
    \hline
        $n$ & ABM & Naive & RBM & True \\
        \hline
        $5e2$ & 0.944 & 0.862 & 0.941 & 0.942 \\
        \rowcolor{Gray}
        $1e3$ & 0.947 & 0.879 & 0.945 & 0.948 \\
        $5e3$ & 0.938 & 0.860 & 0.939 & 0.938 \\
        \rowcolor{Gray}
        $3e4$ & 0.945 & 0.869 & 0.946 & 0.947 \\ \hline
    \end{tabular}         \label{subtable:MVG_coverage_rho_0.5_m_10}
    }
    \qquad
    \subfloat[][$m = 10, \rho = 0.999$]{
    \begin{tabular}{|c|c|c|c|c|} 
    \hline
        $n$ & ABM & Naive & RBM & True \\
        \hline
        $5e2$ & 0.418 & 0.887 & 0.678 & 0.940 \\
        \rowcolor{Gray}
        $1e3$ & 0.538 & 0.866 & 0.735 & 0.951 \\
        $5e3$ & 0.889 & 0.869 & 0.911 & 0.947 \\
        \rowcolor{Gray}
        $3e4$ & 0.932 & 0.856 & 0.931 & 0.942 \\ \hline
    \end{tabular}    \label{subtable:MVG_coverage_rho_0.999_m_10}
    }
    \caption{Bivariate normal: Coverage probabilities from 1000 replications at 95\% nominal level.}
    \label{table:MVG_coverage_m_5_10}
\end{table}

In the rest of the examples, we focus only on comparing the RBM to the ABM estimator, since it is evident that the naive estimator is unreliable. 

\subsection{Rosenbrock distribution}

The Rosenbrock distribution is commonly used to serve as a benchmark for testing MCMC algorithms. As shown in Figure~\ref{fig:rosenbrock_contour}, the density takes positive values along a narrow parabola making it difficult for MCMC algorithms to take large steps. Consider the 2-dimensional Rosenbrock density discussed in \cite{goodman2010} and \cite{pagani2019ndimensional}:
\[
\pi (x_{1}, x_{2}) \propto \exp \left\{-\dfrac{1}{20}(x_{1} - 1)^{2} - 5 \left(x_{2} - x_{1}^{2} \right)^{2} \right\} \,.
\]
Although there are specialized MCMC algorithms available that traverse the contours of the Rosenbrock density more efficiently, we implement a random walk Metropolis-Hastings algorithm with a Gaussian proposal.  Trace plots for one component of two runs are presented in Figure~\ref{fig:rosenbrock_contour}.
\begin{figure}[htbp]
    \centering
    \subfloat[][$2d$ Rosenbrock density]{    
    \includegraphics[width= 2.8in] {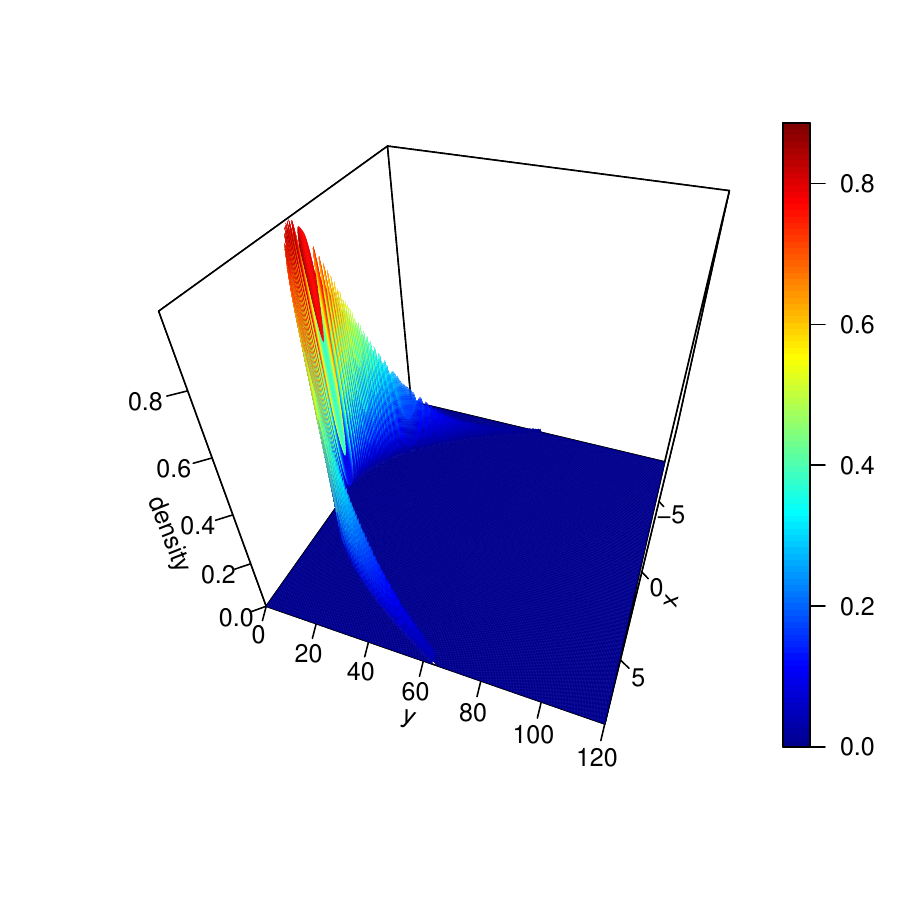}
    }
  \subfloat[][Trace plots for two Markov chains]{
    \includegraphics[width= 2.8in] {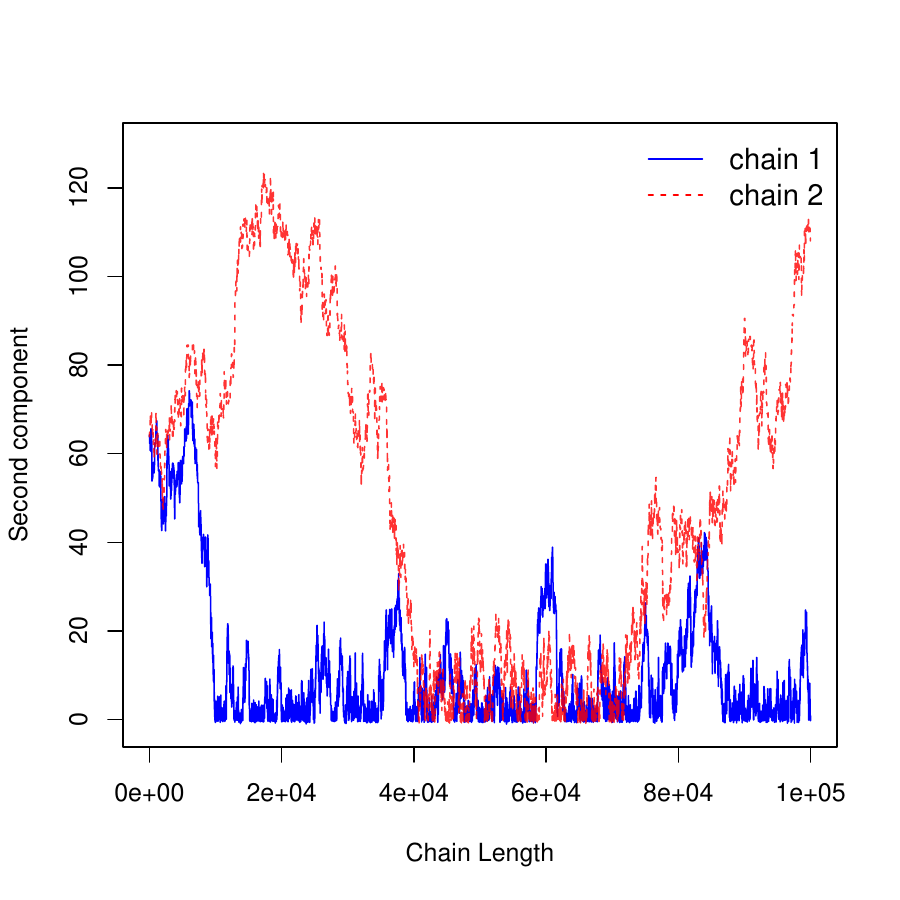}
      }
    \caption{$2d$ Rosenbrock density and trace plots.}
    \label{fig:rosenbrock_contour}
\end{figure}
% The second component of the starting values are calculated by sampling from a normal distribution using the first component and the probability distribution of $x_{2} | x_{1}$. Choosing these values ensures that we start sampling on the ridge of high probability for Rosenbrock density. This diverse set of starting values allows the independent Markov chains to collectively explore the state space while being individually restricted for small sample settings. 
% Similar to sampling from multivariate Gibbs sampler with $\rho = 0.999$ in the previous example, $\sqrt{n}$ is too small to capture correlation for this Markov chain and instead, we use the optimal batch-size (\cite{liu2018batch}) using \cite{fleg:hugh:vats:2015}.

We set $m = 5,10$ with starting points sampled from the target. Since $\Sigma$ is unknown, we compare ABM and RBM by analyzing running plots of the estimated ESS. Simulation terminates when an estimated ESS, $\widehat{\text{ESS}} > M$, for a pre-specified  $M$. To avoid early termination, it is critical that the estimate of $\Sigma$ is not underestimated.

\begin{figure}[htbp]
    \centering
    \subfloat[][$m = 5$]{
         \centering
         \includegraphics[width=2.4in]{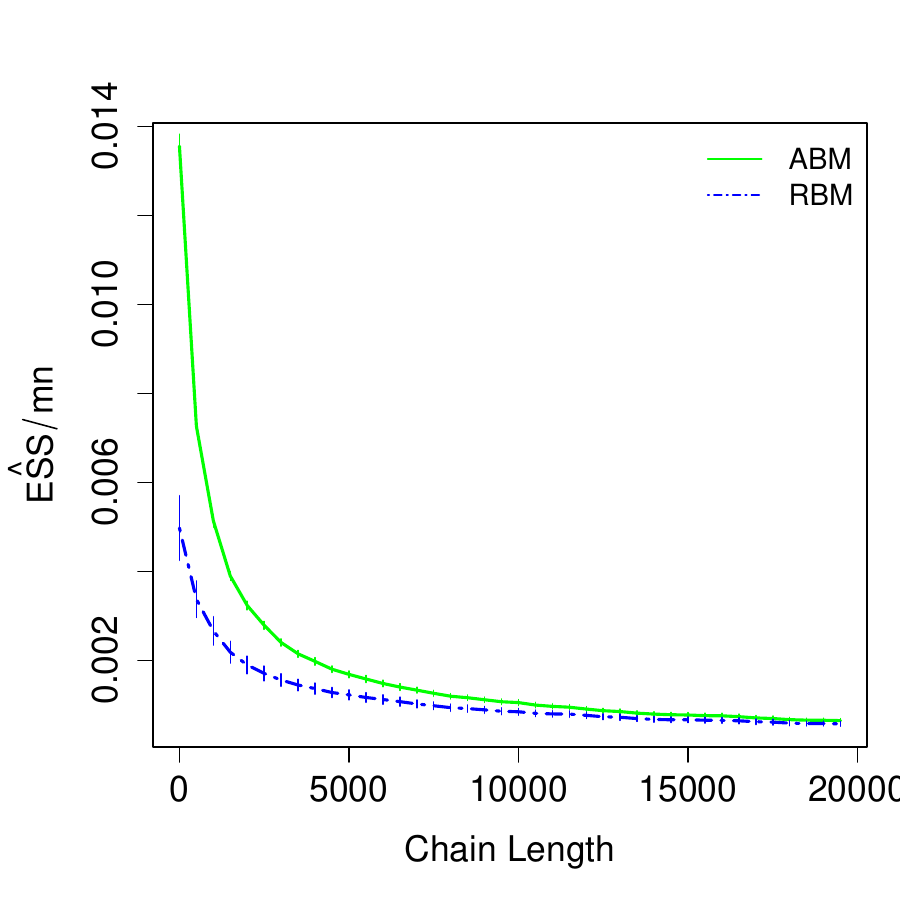}
         \label{subfig:rosenbrock_ESS_running_m_5}
}
    \subfloat[][$m = 10$]{
         \centering
         \includegraphics[width=2.4in]{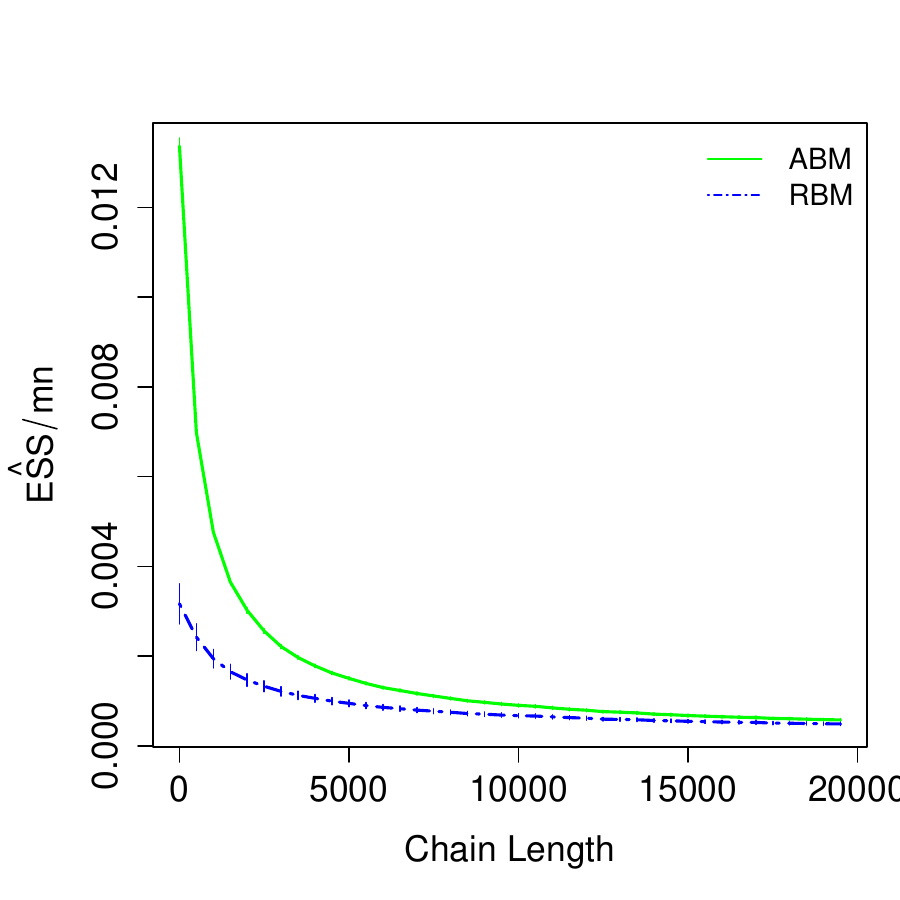}
         \label{subfig:rosenbrock_ESS_running_m_10}
}
    \caption{Rosenbrock: $\widehat{\text{ESS}}/mn$ running plots (with standard errors).}
    \label{fig:rosenbrock_ESS_running}
\end{figure}
Figure~\ref{fig:rosenbrock_ESS_running} presents the estimated $\widehat{\text{ESS}}/mn$ over 100 replications using the average sample covariance matrix estimate of $\Var_F(h(X_{11}))$ and ABM and RBM estimates of $\Sigma$. We observe that for both $m = 5,10$, ABM grossly overestimates $\widehat{\text{ESS}}/mn$ and converges from above, whereas RBM converges from below. As a consequence, using RBM will help safeguard against early termination, whereas ABM is significantly more likely to produce inadequate estimates at termination.

Since the true mean for this target density is known, we compare coverage probabilities of the resulting confidence regions using ABM and RBM over 1000 replications in Table~\ref{table:rosenbrock_coverage_m_5}. The results for $m = 5$ and $m = 10$ are similar; the ABM estimator yields abysmally low coverage, especially at the beginning of the process. The RBM estimator, on the other hand, yields a high coverage probability despite the slow mixing nature of the Markov chain. Although the coverage is below the desired $95\%$ mark, we observe that RBM gives better coverage than ABM till chains of length = $10^5$. Given the slow mixing nature of the chain, we can expect the coverage to increase for both estimators with an increase in running length.

\begin{table}[h]
\renewcommand{\arraystretch}{1}
\centering
    \subfloat[$m = 5$]{
    \begin{tabular}{|c|c|c|}
    \hline
        $n$ & ABM & RBM \\
        \hline
        $5e3$ & 0.251 & 0.364 \\
        \rowcolor{Gray}
        $10^4$ & 0.378 & 0.449\\
        $5e4$ & 0.604 & 0.621 \\
        \rowcolor{Gray}
        $10^5$ & 0.706 & 0.721 \\ \hline
    \end{tabular}
    }
    \qquad
    \subfloat[$m = 10$]{
    \begin{tabular}{|c|c|c|}
    \hline
        $n$ & ABM & RBM \\
        \hline
        $5e3$ & 0.265 & 0.424 \\
        \rowcolor{Gray}
        $10^4$ & 0.391 & 0.498\\
        $5e4$ & 0.714 & 0.742 \\
        \rowcolor{Gray}
        $10^5$ & 0.763 & 0.776 \\ \hline
    \end{tabular}
    }
    \caption{Rosenbrock: Coverage probabilities over 1000 replications at 95\% nominal level.}
    \label{table:rosenbrock_coverage_m_5}
\end{table}
% The coverage and ESS results show that RBM requires a smaller number of samples to converge to the true value, an improvement achieved by taking the overall mean ($\hat{\mu}$) for estimation. $\hat{\mu}$ is a better estimate of the mean than $\hat{\mu}_{k}$ as it takes into account information from all the Markov chains which individually take longer to explore the state space effectively as opposed to collectively achieving the same quicker. This factor plays an important effect particularly for small sample settings where the Markov chains are haven't explored the state space. 

\subsection{Bayesian neural networks}

Generating MCMC samples from the parameter posteriors of neural networks is a  notoriously challenging problem. This example considers a multi-layer perceptron (MLP) with three neurons in the output layer performing multiclass classification on a real-world dataset involving penguins. This dataset consists of body measurements for three penguin species: Adelie, Chinstrap, and Gentoo penguins. Four body measurements per penguin, specifically body mass, flipper length, bill length, and bill depth, along with the sex and location (island), are used to predict the penguin species. More information about the dataset can be found in \cite{penguins_data}. 

Following the experiments from \cite{papamarkou2021challenges}, we use the MLP model for the penguin example to get 29 MLP parameters. We employ the \texttt{bnn\_mcmc\_examples} package to generate the 29 dimensional Hamiltonian Monte Carlo (HMC) chains on the MLP parameters with the same specifications as used in the paper. We run 5 parallel chains of size 1e5 each with starting values sampled from the Gaussian prior and repeat for 10 replications. The trace plots of dimensions 15 and 23 in figure \ref{fig:penguins_trace} show the multi-modal nature of the target, with the Markov chain jumping between the modes. 

\begin{figure}[htbp]
    \centering
\subfloat[][$\beta_{15}$]{
         \centering
         \includegraphics[width=.44\textwidth]{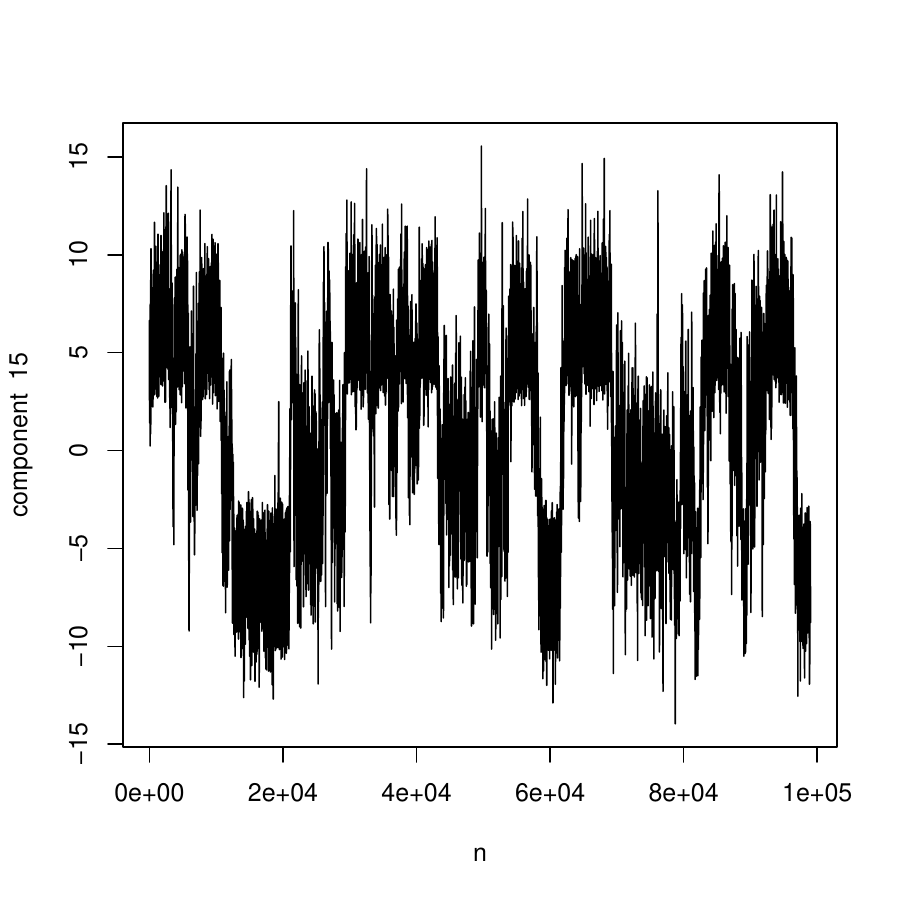}
         % \label{subfig:multinomial_trace_1_10}
        }
     \qquad
\subfloat[][$\beta_{23}$]{
         \centering
         \includegraphics[width=.44\textwidth]{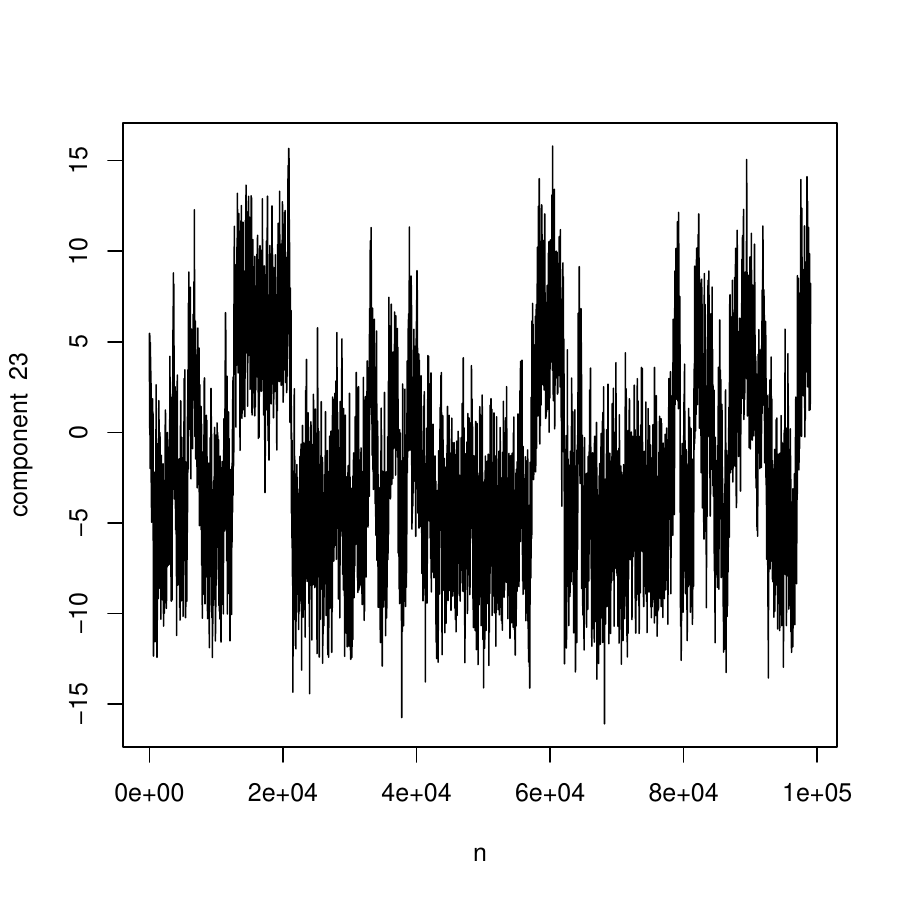}
         % \label{subfig:multinomial_trace_11_20}
        }
     \caption{Bayesian NN: Trace plots for multi-modal components.}
    \label{fig:penguins_trace}
\end{figure}

In this example, both the true posterior mean and the resulting asymptotic covariance matrix $\Sigma$ are unknown. Thus coverage probabilities are not estimable. We, therefore, focus on the quality of estimation of ESS. Figure~\ref{fig:penguin_ESS_running} contains the running estimate of $\widehat{\text{ESS}}/mn$ using both ABM and RBM estimators. 

\begin{figure}[htbp]
    \centering
    \includegraphics[width= 2.5in]{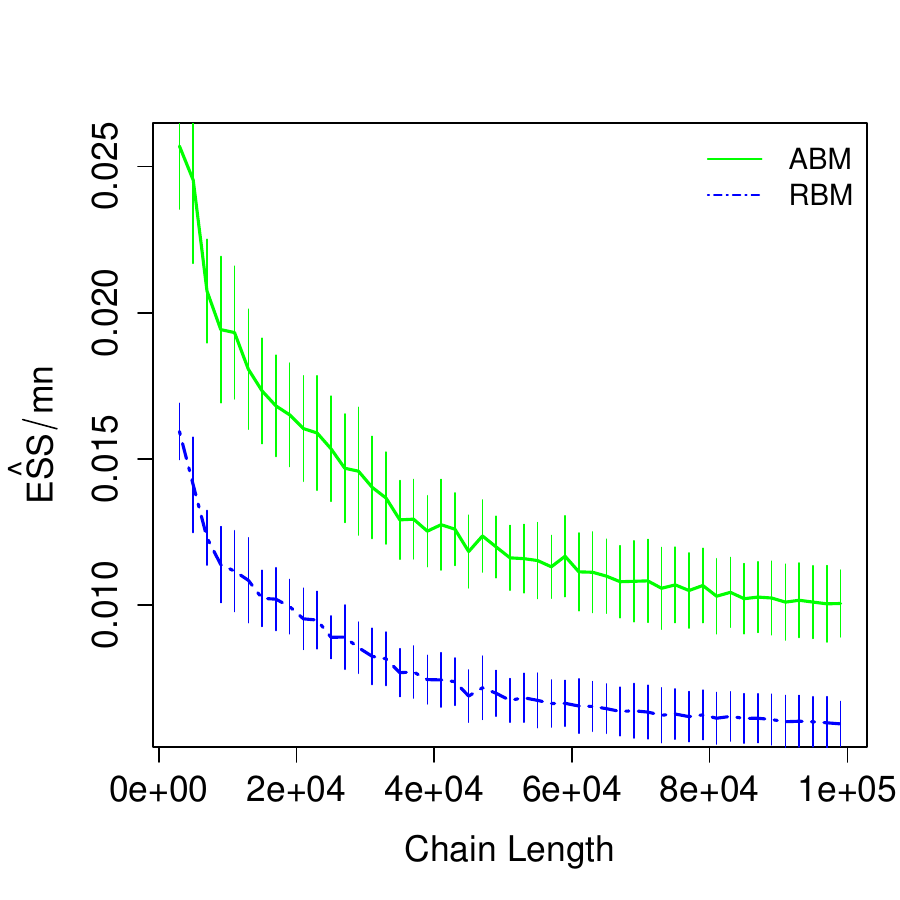}
    \caption{Bayesian NN: $\widehat{\text{ESS}}/mn$ running plots averaged over 10 replications.}
    \label{fig:penguin_ESS_running}
\end{figure}

Given the multi-modal nature of the target, the Markov chains take a while to explore the state space adequately. The advantage of using RBM over ABM in such scenarios is clearly evident. RBM produces a systematically lower estimate of ESS, whereas the ABM estimator may give a false sense of security in early small runs of the sampler. This goes to show that the finite sample gains of using the RBM estimator are undeniable, and MCMC diagnostics are much more reliable with RBM.

\section{Discussion}

Work in MCMC output analysis has so far not caught up to the ease with which users can implement parallel MCMC runs. This work is an attempt at consistent estimation of $\Sigma$ for parallel MCMC.   For fast mixing Markov chains, the RBM estimator is as good as the current state-of-the-art, and for slow mixing chains, the RBM estimator provides significant improvement and safeguards users from premature termination of the process.  \cite{agarwal2022globally} proposed a parallel chain version of the spectral variance estimator, but as is widely understood, batch-means estimator remains to be significantly more computationally efficient.

Choosing a batch size is a challenging problem for single-chain batch means, and the optimal batch size of \cite{liu2018batch} is a step in the right direction. Since each realization of a Markov chain can yield different optimal batch sizes, an important future extension would be to construct RBM estimators from pooling batches of different batch sizes, each tuned to its own Markov chain. Additionally, here we consider only non-overlapping batch means estimators due to their computational feasibility. Similar construction for multiple-chain output analysis should be possible for the overlapping batch means estimators found in \cite{fleg:jone:2010,liu2018batch}.

% We use $\sqrt{n}$ as the batch size for simulations with adequately mixing Markov chains. This area calls for further work in the future. A possible avenue for future research is to use the optimal batch for each chain individually and modify the formula for RBM and ABM to handle different batch sizes for each chain. This would ensure that while highly correlated chains get a large batch size to capture the correlation, we still get a large number of batch means from other Markov chains to improve estimation quality.
% Possible addition
% Further, we can tackle the problem of underestimation commonly associated with batch means based methods by formulating a weighted version of RBM. Studying the properties and simulation results of a lugsail version of weighted RBM would be interesting. 

\section{Acknolwedgements}

Dootika Vats is supported by SERB (SPG/2021/001322).

\appendix

\section{Preliminaries}
We present a few preliminary results to assist us in our proofs.
\begin{lemma}
\label{lemma: brownian_bound}
(Csorgo and Revesz (1981)). Suppose Assumption\ref{ass:bn} holds, then for all $\epsilon > 0$ and for almost all sample paths, there exists $n_{0}\left(\epsilon\right)$ such that $\forall n\geq n_{0}$ and $\forall i = 1, ..., p$
\[
\sup_{0\leq t \leq n-b_n}\sup_{0 \leq s \leq b_n} \left| B^{\left(i\right)}\left(t+s\right) - B^{\left(i\right)}\left(t\right) \right| < \left(1+ \epsilon\right)\left(2b_n\left(\log\dfrac{n}{b_n} + \log\; \log\; n\right)\right)^{1/2} ,
\]
\[
\sup_{0 \leq s \leq b_n} \left|B^{\left(i\right)}\left(n\right) - B^{\left(i\right)}\left(n - s\right)\right| < \left(1+ \epsilon\right)\left(2b_n\left(\log\dfrac{n}{b_n} + \log\;\log\;n\right)\right)^{1/2} , \;and
\]
\[
\left|B^{\left(i\right)}\left(n\right)\right| < \left(1+\epsilon\right)\sqrt{2n\;\log \log n} \; . 
\]
\end{lemma}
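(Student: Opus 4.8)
The plan is to reduce to a single coordinate and then treat the three bounds separately, observing that all three are \emph{upper} bounds holding eventually, so only the ``first Borel--Cantelli'' half of each classical statement is needed. Since the coordinates $B^{(1)}, \dots, B^{(p)}$ of the standard Brownian motion $B$ are independent standard one-dimensional Brownian motions and $p$ is fixed, it suffices to establish each bound for a single coordinate $W = B^{(i)}$ and then intersect over the finitely many $i$. Write $\beta_n = \left(2b\left(\log(n/b) + \log\log n\right)\right)^{1/2}$ for the common normalizer in the first two bounds. The third bound is exactly the classical law of the iterated logarithm for Brownian motion, $\limsup_{n\to\infty} |W(n)|/\sqrt{2n\log\log n} = 1$ almost surely, which immediately yields $|W(n)| < (1+\epsilon)\sqrt{2n\log\log n}$ for all large $n$ with probability one.

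For the first bound I would control the double supremum by a Gaussian tail estimate combined with a discretization of the outer variable $t$. The reflection principle and symmetry give, for each fixed $t$,
\[
\Pr\left(\sup_{0\le s\le b}\left|W(t+s)-W(t)\right| > x\right) \le 4\exp\left(-\frac{x^2}{2b}\right).
\]
I would then place a grid $t_j = j\eta$, $j = 0,1,\dots,\lceil (n-b)/\eta\rceil$, of mesh $\eta$ on $[0,n-b]$, bound each cell using this tail at the threshold $x = (1+\epsilon/2)\beta_n$, and take a union over the $\asymp n/\eta$ cells. At this threshold $\exp(-x^2/(2b)) = (b/n)^{(1+\epsilon/2)^2}(\log n)^{-(1+\epsilon/2)^2}$, so for a mesh $\eta$ chosen small relative to $b$ the union bound decays fast enough to be summable along a geometrically growing subsequence $n_k$, the strictly-greater-than-one exponent $(1+\epsilon/2)^2$ absorbing the grid cardinality. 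The oscillation of $W$ within each cell, not seen by the grid, is controlled separately via L\'evy's modulus of continuity and is $o(\beta_n)$, so the threshold may be relaxed from $(1+\epsilon/2)\beta_n$ to $(1+\epsilon)\beta_n$. Borel--Cantelli gives the bound along $n_k$, and the monotonicity of both $b$ and $n/b$ from Assumption~\ref{ass:bn} lets me sandwich a general $n$ between consecutive $n_k$ and transfer it, since $\beta_n$ then varies by a factor tending to one across each block.

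The second bound is the same estimate localized to the right endpoint: setting $t = n - s$ identifies $\sup_{0\le s\le b}|W(n)-W(n-s)|$ with an increment over the single window $[n-b,n]$, so the fixed-$t$ tail bound above already controls it at threshold $(1+\epsilon)\beta_n$ and no outer discretization is needed. Taking the union over $i = 1,\dots,p$ of the three coordinatewise almost-sure events then completes the argument.

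I expect the main obstacle to be calibrating the discretization mesh $\eta$ and the summation subsequence $n_k$ so that the constant emerges as exactly $(1+\epsilon)$ \emph{uniformly} over the full admissible range of $b$. The delicate regime is when $b$ grows almost linearly in $n$, so that $\log(n/b)$ is bounded and the additive $\log\log n$ term inside $\beta_n$ is precisely what keeps the normalizer large enough for the tail bound to remain summable; ensuring the chaining estimate degrades gracefully across this entire range, rather than only for $b = o(n/\log n)$, is the technical heart of the Csörgő--Révész increment theorem and the part I would treat most carefully. In practice, of course, one simply invokes Csörgő and Révész (1981) for the stated inequalities.
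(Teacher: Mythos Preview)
The paper does not prove this lemma at all: it is stated in the appendix as a preliminary result and attributed directly to Cs\"org\H{o} and R\'ev\'esz (1981), with no argument given. Your final sentence already anticipates this---in practice one simply invokes the reference---and that is exactly what the paper does. Your sketch of how the proof would go (LIL for the third bound, reflection-principle tail plus discretization and Borel--Cantelli for the first, the second as a single-window special case, then intersection over the $p$ coordinates) is a faithful outline of the standard route to such increment results, but it goes well beyond what the paper itself supplies.
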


% \begin{lemma}
% \label{lemma: psi_lambda}
% (Theorem 5 \citep{vats:flegal:2018}) Assume $g : \mathcal{X} \to \mathbf{R}$ such that $E_{F}|g|^{2+\delta} < \infty$ for some $\delta > 0$. If {$X_{t}$} is a polynomially  ergodic Markov chain of order $\mathbf{\epsilon} > \left(1 + \epsilon\right)\left(1 + 2/\delta\right)$ for some $\epsilon > 0$, then assumption \hyperlink{assm2}{2} holds with $\psi\left(n\right) = n^{1/2 - \lambda}$ for some $\lambda > 0$ for any initial distribution.\\
% \end{lemma}

The following straightforward decomposition will be used often and thus is presented as a lemma.
\begin{lemma}
  \label{lem:Y_diff_simpli}
  For $\hat{\mu}_k$ and $\hat{\mu}$ defined as before,
\[
  \sum\limits_{k=1}^{m}\left(\hat{\mu}_k-\hat{\mu}\right)\left(\hat{\mu}_k-\hat{\mu}\right)^{T} = \ds \sum_{k=1}^{m} (\hat{\mu}_k - \mu)(\hat{\mu}_k - \mu)^T - m (\hat{\mu} - \mu)(\hat{\mu} - \mu)^T\,.
\]
\end{lemma}

\subsection{Strong consistency of RBM}
\label{subsec: strong_consistency_RBM}

\begin{proof}[Proof of Theorem~\ref{thm:RBM_consistency}]
We will show that the RBM estimator can be decomposed into the ABM estimator, plus some small order terms. Consider 
\begin{align*}
\hat{\Sigma}_{R,b_n} &= \dfrac{b_n}{a_nm-1}\sum_{k=1}^{m}\sum_{s=1}^{a_n}\left(\bar{Y}_{k,s} - \hat{\mu}\right)\left(\bar{Y}_{k,s} - \hat{\mu}\right)^{T}\\
&= \dfrac{b_n}{a_nm-1}\sum_{k=1}^{m}\sum_{s=1}^{a_n}\left(\bar{Y}_{k,s} - \hat{\mu}_k + \hat{\mu}_k - \hat{\mu}\right)\left(\bar{Y}_{k,s} - \hat{\mu}_k + \hat{\mu}_k - \hat{\mu}\right)^{T}\\
&= \dfrac{b_n}{a_nm-1}\sum_{k=1}^{m}\sum_{s=1}^{a_n}\left[\left(\bar{Y}_{k,s}-\hat{\mu}_k\right)\left(\bar{Y}_{k,s}-\hat{\mu}_k\right)^{T} + \left(\bar{Y}_{k,s}-\hat{\mu}_k \right)\left(\hat{\mu}_k - \hat{\mu}\right)^{T}  \right.\\
& \quad \quad  \left. + \left(\hat{\mu}_k-\hat{\mu}\right)\left(\bar{Y}_{k,s}-\hat{\mu}_k\right)^{T} + \left(\hat{\mu}_k-\hat{\mu}\right)\left(\hat{\mu}_k-\hat{\mu}\right)^{T}\right]\\
&= \dfrac{b_n}{a_nm-1}\dfrac{a_n-1}{b_n}\sum_{k=1}^{m}\hat{\Sigma}_{k,b_n} + \dfrac{b_n}{a_nm-1}\sum_{k=1}^{m}\left[\sum_{s=1}^{a_n}\left(\bar{Y}_{k,s}-\hat{\mu}_k\right)\right]\left(\hat{\mu}_k-\hat{\mu}\right)^{T} \\
& \quad \quad + \dfrac{b_n}{a_nm-1}\sum_{k=1}^{m}\left(\hat{\mu}_k-\hat{\mu}\right)\sum_{s=1}^{a_n}\left(\bar{Y}_{k,s}-\hat{\mu}_k\right)^{T} + \dfrac{a_nb_n}{a_nm-1}\sum_{k=1}^{m}\left(\hat{\mu}_k-\hat{\mu}\right)\left(\hat{\mu}_k-\hat{\mu}\right)^{T}\\
&= \dfrac{m(a_n-1)}{a_nm-1} \hat{\Sigma}_{A,b_n} + \dfrac{a_nb_n}{a_nm-1}\sum_{k=1}^{m}\left(\hat{\mu}_k-\hat{\mu}\right)\left(\hat{\mu}_k-\hat{\mu}\right)^{T} \,. \numberthis \label{eq:repli_breakdown}
\end{align*}

By Lemma~\ref{lem:Y_diff_simpli}
\begin{align*}
\dfrac{a_nb_n}{a_nm - 1}\sum_{k=1}^{m}\left(\hat{\mu}_k-\hat{\mu}\right)\left(\hat{\mu}_k-\hat{\mu}\right)^{T} 
&= \dfrac{a_nb_n}{a_nm - 1}\sum_{k=1}^{m}\left(\hat{\mu}_k-\mu\right)\left(\hat{\mu}_k-\mu\right)^{T} - \dfrac{a_n b_n m}{a_nm - 1}\left(\hat{\mu}-\mu\right)\left(\hat{\mu}-\mu\right)^{T}
\end{align*}

Let 
\[
A = \dfrac{a_n}{a_nm - 1}\sum_{k=1}^{m}b_n\left(\hat{\mu}_k-\mu\right)\left(\hat{\mu}_k-\mu\right)^{T} \quad \text{ and }  \quad B = \dfrac{a_n b_n m}{a_nm - 1}\left(\hat{\mu}-\mu\right)\left(\hat{\mu}-\mu\right)^{T}\,.
\]
By \cite{stra:1964}, $\psi(n) = O(\sqrt{n \log \log n})$ so that the rate is such that the law of iterated logarithm holds. Let $A^{ij}$, be the $(i,j)$th element of the matrix $A$. Using the SIP and Lemma~\ref{lemma: brownian_bound},
\begin{align*}
    \left |A^{ij} \right | &= \dfrac{a_n}{a_nm - 1}\sum_{k=1}^{m} \left| b_n\left(\bar{Y}^{\left(i\right)}_{k} - \mu^{(i)}\right)\left(\bar{Y}^{\left(j\right)}_{k} - \mu^{(j)}\right) \right|\\
    & \leq \dfrac{a_n}{a_nm - 1}\sum_{k=1}^{m}\left[ \left| \dfrac{b_n}{n^{2}}\left(\sum_{r=1}^{n}Y^{\left(i\right)}_{r,k} - n\mu^{(i)}\right)\left(\sum_{s=1}^{n}Y^{\left(j\right)}_{s,k} - n\mu^{(j)}\right) \right|  \right]\\
    &= \dfrac{a_n}{a_nm - 1}\sum_{k=1}^{m}\Bigg[\dfrac{b_n}{n^{2}}  \Bigg|\left(\sum_{r=1}^{n}Y^{\left(i\right)}_{r,k} - n\mu^{(i)} - \Sigma_{ii}B^{\left(i\right)} + \Sigma_{ii}B^{\left(i\right)}\right) \\ 
    & \qquad \qquad \qquad \qquad \times\left(\sum_{s=1}^{n}Y^{\left(j\right)}_{s,k} - n\mu^{(j)} - \Sigma_{jj}B^{\left(j\right)} + \Sigma_{jj}B^{\left(j\right)}\right) \Bigg| \Bigg]\\
    & \leq \dfrac{a_n}{a_nm - 1}\sum_{k=1}^{m}\left[\dfrac{b_n}{n^{2}}\left[ \left| \left(\sum_{r=1}^{n}Y^{\left(i\right)}_{r,k} - n\mu^{(i)} - \Sigma_{ii}B^{\left(i\right)}\right)\left(\sum_{s=1}^{n}Y^{\left(j\right)}_{s,k} - n\mu^{(j)} - \Sigma_{jj}B^{\left(j\right)}\right) \right| \right. \right.\\
    & \left. \left. \quad + \left|\left(\Sigma_{ii}B^{\left(i\right)}\right) \right|    \left|\left(\sum_{s=1}^{n}Y^{\left(j\right)}_{s,k} - n\mu^{\left(j\right)} - \Sigma_{jj}B^{\left(j\right)}\right)\right|   +   \left|\left(\Sigma_{jj}B^{\left(j\right)}\right) \right|   \left|\left(\sum_{r=1}^{n}Y^{\left(i\right)}_{r,k} - n\mu^{(i)} - \Sigma_{ii}B^{\left(i\right)}\right) \right| \right. \right.\\
    &  \quad + \left|\left(\Sigma_{ii}B^{\left(i\right)}\right) \right|  \left| \left(\Sigma_{jj}B^{\left(j\right)}\right) \right| \Bigg]\Bigg]\\
    &< \dfrac{a_n}{a_nm - 1}\sum_{k=1}^{m}\left[\dfrac{b_n}{n^{2}}\left[\left(D\psi\left(n\right)\right)^{2} + \Sigma_{ii}\left(1 + \epsilon\right)\sqrt{2n\log\log n}\left(D\psi\left(n\right)\right)  \right. \right.\\
    & \quad \left. + \Sigma_{jj}\left(1 + \epsilon\right)\sqrt{2n\log\log n}\left(D\psi\left(n\right)\right) + \Sigma_{ii}\Sigma_{jj}\left(1 + \epsilon\right)^{2}\left(2n\log\log n\right) \right] \Bigg]\\
    &= \dfrac{a_nm}{a_nm - 1} \left[\dfrac{b_n}{n}\left(\dfrac{D \psi\left(n\right)}{\sqrt{n}}\right)^{2} + \left(\Sigma_{ii} + \Sigma_{jj}\right)\left(1 + \epsilon\right)\dfrac{b_n \sqrt{2\log\log n}}{n}\left(\dfrac{D\psi\left(n\right)}{\sqrt{n}}\right) \right.\\
    & \quad + \left. 2\Sigma_{ii}\Sigma_{jj}\left(1 + \epsilon\right)^{2}\left(\dfrac{b_n\log\log n}{n}\right) \right]\\
     &= \dfrac{a_nm}{a_nm - 1} \dfrac{b_n\log \log n}{n} \\
     & \quad \times\left[\left(\dfrac{D \psi\left(n\right)}{\sqrt{n \log \log n}}\right)^{2} + \sqrt{2} \left(\Sigma_{ii} + \Sigma_{jj}\right)\left(1 + \epsilon\right)\left(\dfrac{D\psi\left(n\right)}{\sqrt{n \log \log n}}\right) +  2\Sigma_{ii}\Sigma_{jj}\left(1 + \epsilon\right)^{2}\right]\\
    & \xrightarrow{a.s} 0 \; \; n \to \infty \, .
\end{align*}
Following the same steps for $B^{ij}$ with $n$ replaced by $mn$,  $\left|B^{ij} \right| \xrightarrow{a.s} 0 \; \; n \to \infty$.
Therefore $\|A\| \xrightarrow{a.s} \mathbf{0}$ and  $\|B\| \xrightarrow{a.s} \mathbf{0}$ as  $n \to \infty$. Using this,
\begin{equation}
\label{eq:Y_diff_term_0}
  \left\|\dfrac{a_nb_n}{a_nm - 1}\sum_{k=1}^{m}\left(\hat{\mu}_k-\hat{\mu}\right)\left(\hat{\mu}_k-\hat{\mu}\right)^{T}\right\| = \left\|A - B \right\| \leq \|A\| + \|B\| \xrightarrow{a.s} \mathbf{0} \; \; n \to \infty \,.
\end{equation}

Using \eqref{eq:Y_diff_term_0} and consistency of $\hat{\Sigma}_{k,b_n}$ for all $k$ in \eqref{eq:repli_breakdown}
\begin{align*}
  \left\|\hat{\Sigma}_{R,b_n} - \Sigma   \right\| & =  \left\|\dfrac{m(a_n-1)}{a_nm-1} \hat{\Sigma}_{A,b_n} - \Sigma +   \dfrac{a_nb_n}{a_nm-1}\sum_{k=1}^{m}\left(\hat{\mu}_k-\hat{\mu}\right)\left(\hat{\mu}_k-\hat{\mu}\right)^{T}\right\| \\
  & \leq \left\| \dfrac{m(a_n-1)}{a_nm-1} \hat{\Sigma}_{A,b_n} - \Sigma \right\| +  \left\|  \dfrac{a_nb_n}{a_nm-1}\sum_{k=1}^{m}\left(\hat{\mu}_k-\hat{\mu}\right)\left(\hat{\mu}_k-\hat{\mu}\right)^{T}  \right\|\\
  & \overset{a.s.}{\to} 0 \quad \text{ as } n \to \infty\,.
\end{align*}
\end{proof}

% \subsection{Bias and Variance for ABM}
% \label{subsec:bias_var_ABM}

% \abm*
% \begin{proof}
% As $\mathbb{E}\left(\hat{\Sigma}_{A,b_n}\right) = \mathbb{E}\left(\hat{\Sigma}_{BM,i}\right) \; \; \forall i$,\\
% \[
% Bias\left(\hat{\Sigma}_{A,b_n}\right) = Bias\left(\hat{\Sigma}_{b}\right) = \dfrac{\Gamma}{b_n} + o\left(\dfrac{1}{b_n}\right)
% \]
% and 
% \[
% Bias\left(\hat{\Sigma}_{A,l}\right) = Bias\left(\hat{\Sigma_{L}}\right) = \dfrac{\Gamma}{b_n}\left(\dfrac{1 - rc}{1 - c}\right) + o\left(\dfrac{1}{b_n}\right) \; . \quad \quad \citep{vats:flegal:2018}
% \]
% Using the fact that the $m$ replicated Markov chains are independent,
% \begin{align*}
%     Var\left(\hat{\Sigma}_{A,l}\right) &= Var\left(\dfrac{1}{m}\sum_{i=1}^{m}\hat{\Sigma}_{BM,i}\right)\\
%     &= \dfrac{1}{m}Var\left(\hat{\Sigma}_{BM,1}\right)  \quad \quad \citep{vats:flegal:2018}
% \end{align*}
% Therefore,
% \[
% \dfrac{n}{b_n}Var\left(\hat{\Sigma}^{ij}_{A,L}\right) = \dfrac{1}{m}\left[\dfrac{1}{r} + \dfrac{r - 1}{r(1 - c)^{2}}\right]\left(\Sigma^{2}_{ij} + \Sigma_{ii}\Sigma_{jj}\right) + o(1)  \; .
% \]
% \end{proof}

\subsection{Bias of RBM}
\label{subsec:bias_RBM}

\begin{proof}[Proof of Theorem~\ref{thm:rbm_bias}]
First, using Lemmas 1-3 of \citep{song:schm:1995},
\begin{align*}
\mathbb{E}\left[\sum\limits_{k=1}^{m}\left(\hat{\mu}_k-\hat{\mu}\right)\left(\hat{\mu}_k-\hat{\mu}\right)^{T}\right] &= \mathbb{E}\left[\sum_{k=1}^{m}\hat{\mu}_k\hat{\mu}_k^{T} -\sum_{k=1}^{m}\hat{\mu}_k\hat{\mu}^{T} - \hat{\mu}\sum_{k=1}^{m}\hat{\mu}_k^{T} + m\hat{\mu}\hat{\mu}^{T}\right]\\
% &= \mathbb{E}\left[\sum_{k=1}^{m}\hat{\mu}_k\hat{\mu}_k^{T} - m\hat{\mu}\hat{\mu}^{T} - m\hat{\mu}\hat{\mu}^{T} + m\hat{\mu}\hat{\mu}^{T}\right]\\
&= \mathbb{E}\left[\sum_{k=1}^{m}\hat{\mu}_k\hat{\mu}_k^{T} - m\hat{\mu}\hat{\mu}^{T}\right]\\
&= m\left(\mathbb{E}\left[\bar{Y}_{1}\bar{Y}_{1}^{T}\right] - \mathbb{E}\left[\hat{\mu}\hat{\mu}^{T}\right]\right)\\
% &= m\left(\Var\left(\bar{Y}_{1}\right) + \mu\mu^{T} - \Var\left(\hat{\mu}\right) - \mu\mu^{T}\right)\\
&= m\left(\Var\left(\bar{Y}_{1}\right) - \Var\left(\hat{\mu}\right)\right)\\
&= m\left[\dfrac{\Sigma}{n} + \dfrac{\Gamma}{n^{2}} + o\left(\dfrac{1}{n^{2}}\right) - \dfrac{\Sigma}{mn} - \dfrac{\Gamma}{mn^{2}} + o\left(\dfrac{1}{mn^{2}}\right)\right]\\
&= m\left[\dfrac{\Sigma}{n}\left(1-\dfrac{1}{m}\right) + \dfrac{\Gamma}{n^{2}}\left(1-\dfrac{1}{m}\right) + o\left(\dfrac{1}{n^{2}}\right)\right]\,. \numberthis \label{eq:expect_y_diff}
\end{align*}
By \eqref{eq:repli_breakdown}, \eqref{eq:expect_y_diff}, and the bias of the  BM estimator from  \citet[Theorem 1]{vats:flegal:2018},
\begin{align*}
& \mathbb{E}[\hat{\Sigma}_{R,b_n}] \\
&= \mathbb{E}\left[\dfrac{a_n-1}{a_nm-1}\sum_{k=1}^{m}\hat{\Sigma}_{k,b_n} + \dfrac{a_nb_n}{a_nm-1}\sum_{k=1}^{m}\left(\hat{\mu}_k-\hat{\mu}\right)\left(\hat{\mu}_k-\hat{\mu}\right)^{T}\right]\\
&= \dfrac{m\left(a_n-1\right)}{a_nm-1}\mathbb{E}\left[\hat{\Sigma}_{k,1}\right] + \dfrac{a_n b_n m}{a_nm-1}\left(\dfrac{\Sigma}{n}\left(1-\dfrac{1}{m}\right) + \dfrac{\Gamma}{n^{2}}\left(1-\dfrac{1}{m}\right) + o\left(\dfrac{1}{n^{2}}\right)\right)\\
&= \dfrac{m\left(a_n-1\right)}{a_nm-1}\left[\Sigma + \dfrac{\Gamma}{b_n} + o\left(\dfrac{1}{b_n}\right)\right] + \dfrac{m\Sigma}{a_nm-1}\left(1-\dfrac{1}{m}\right) + \dfrac{m\Gamma}{n\left(a_nm-1\right)}\left(1-\dfrac{1}{m}\right)\\
& \qquad + \dfrac{m}{a_nm-1}o\left(\dfrac{1}{n}\right)\\
&= \dfrac{m\Sigma}{a_nm-1}\left[\left(a_n-1\right)+1-\dfrac{1}{m}\right] + \dfrac{m\Gamma}{a_nm-1}\left(\dfrac{a_n-1}{b_n} + \dfrac{1}{n}\left(1-\dfrac{1}{m}\right)\right) + o\left(\dfrac{1}{b_n}\right)\\
&= \Sigma + \dfrac{m\left(a_n-1\right)\Gamma}{\left(a_nm-1\right)b_n} + \dfrac{m\Gamma}{n\left(a_nm-1\right)}\left(1-\dfrac{1}{m}\right) + o\left(\dfrac{1}{b_n}\right)\\
% &= \Sigma + \dfrac{m\left(a_n-1\right)\Gamma}{\left(a_nm-1\right)b} + o\left(\dfrac{1}{b_n}\right)\\
&= \Sigma + \dfrac{\Gamma\left(a_n m - 1 + 1 - m\right)}{\left(a_n m - 1\right)b_n} + o\left(\dfrac{1}{b_n}\right)\\
% &= \Sigma + \dfrac{\Gamma}{b_n} - \dfrac{\Gamma\left(m - 1\right)}{\left(a_n m - 1\right)b} + o\left(\dfrac{1}{b_n}\right)\\
&= \Sigma + \dfrac{\Gamma}{b_n} + o\left(\dfrac{1}{b_n}\right) \; .
\end{align*}

% \subsection{Proof of Theorem~\ref{thm:rbm_vs_abm_bias}}
% 
% \begin{proof}[Proof of Theorem~\ref{thm:rbm_vs_abm_bias}]

\subsubsection{Difference in RBM and ABM bias}
\label{subsec:rbm_vs_abm_bias}
% We first establish two preliminary results to assist us in calculating the bias of RBM. Recall that $R_{j} = \Cov(Y_{1,1}, Y_{1,1+j})$. For $\bar{Y}_{k} = \sum_{i=1}^{k} Y_{i}/k$ for all k $\in \mathbb{N}$, we have 
% \begin{align*}
%     \mathbb{E}[\bar{Y}^{2}_{k}] &= \mathbb{E} \left[ \dfrac{1}{k^{2}} \left(\sum_{i=1}^{k} Y_{i} \right)^{2} \right] \\
%     &= \dfrac{1}{k^{2}} \mathbb{E} \left[ \sum_{i=1}^{k}Y^{2}_{i} + 2 \sum_{i \neq j}Y_{i} Y_{j} \right] \\
%     &= \dfrac{1}{k^{2}} \left[ k (R_{0} + (\mathbb{E}[Y_{1}])^{2}) + 2 \sum_{i=1}^{k-1} (k - i) (R_{i} + (\mathbb{E}[Y_{1}])^{2})) \right] \\
%     &= \dfrac{R_{0}}{k} + \dfrac{2}{k} \sum_{i=1}^{k-1}\left(1 - \dfrac{i}{k}\right) R_{i} + \dfrac{(\mathbb{E}[Y_{1}])^{2}}{k} + \dfrac{(k-1)}{k}(\mathbb{E}[Y_{1}])^{2} \\
%     &= \dfrac{R_{0}}{k} + \dfrac{2}{k} \sum_{i=1}^{k-1}\left(1 - \dfrac{i}{k}\right) R_{i} + (\mathbb{E}[Y_{1}])^{2} \, .
% \end{align*}
We first establish two preliminary results to assist us in calculating the bias of RBM. Let $R_{1,s} = \Cov(Y_{1,1}, Y_{1,1+s})$. For all $k = 1, \dots, m$, we have

\begin{align*}
    &\mathbb{E}[\hat{\mu}_{k} \hat{\mu}_{k}^T]\\
    &= \mathbb{E} \left[ \dfrac{1}{n^{2}} \left(\sum_{s=1}^{n} Y_{k,s} \right)\left(\sum_{s=1}^{n} Y_{k,s} \right)^{T} \right] \\
    &= \dfrac{1}{n^{2}} \mathbb{E} \left[ \sum_{s=1}^{n}Y_{k,s}Y^{T}_{k,s} + \sum_{s \neq t}Y_{k,s} Y^{T}_{k,t} \right] \\
    &= \dfrac{1}{n^{2}} \mathbb{E} \left[ \sum_{s=1}^{n}Y_{k,s}Y^{T}_{k,s} + \sum_{s < t}Y_{k,s} Y^{T}_{k,t} + \sum_{s > t}Y_{k,s} Y^{T}_{k,t} \right] \\
    &= \dfrac{1}{n^{2}} \left[ n (R_{0} + (\mathbb{E}[Y_{k,1}])\mathbb{E}[Y_{k,1}])^{T}) + \sum_{s=1}^{n-1} (n - s) (R_{s} + \mathbb{E}[Y_{k,1}](\mathbb{E}[Y_{k,1}])^{T})) \right.\\
    & \qquad \qquad \left.+ \sum_{s=1}^{n-1} (n - s) (R_{s}^{T} + \mathbb{E}[Y_{k,1}](\mathbb{E}[Y_{k,1}])^{T})) \right] \\
    &= \dfrac{R_{0}}{n} + \dfrac{1}{n} \sum_{s=1}^{n-1}\left(1 - \dfrac{s}{n}\right) (R_{s} + R_{s}^{T}) + \dfrac{\mathbb{E}[Y_{k,1}](\mathbb{E}[Y_{k,1}])^{T}}{n} + \dfrac{(n-1)}{n}\mathbb{E}[Y_{k,1}](\mathbb{E}[Y_{k,1}])^{T} \\
    &= \dfrac{R_{0}}{n} + \dfrac{1}{n} \sum_{s=1}^{n-1}\left(1 - \dfrac{s}{n}\right) (R_{s} + R_{s}^{T}) + \mathbb{E}[Y_{k,1}](\mathbb{E}[Y_{k,1}])^{T} \, .
\end{align*}

Next, we see that
\begin{align*}
    \mathbb{E} \left[ \hat{\mu} \hat{\mu}^{T} \right] &= \dfrac{1}{m^{2}} \mathbb{E} \left[ \left(\sum_{k=1}^{m}\hat{\mu}_{k} \right)\left(\sum_{k=1}^{m}\hat{\mu}_{k} \right)^{T} \right] \\
    &= \dfrac{1}{m^{2}} \mathbb{E} \left[ \sum_{k=1}^{m} \hat{\mu}_{k}\hat{\mu}^{T}_{k} + \sum_{k \neq l} \hat{\mu}_{k} \hat{\mu}^{T}_{l} \right] \\
    &= \dfrac{1}{m^{2}} \left[m \mathbb{E}[\hat{\mu}_{1} \hat{\mu}_{1}^{T}] + m(m-1) \mathbb{E}[\hat{\mu}_{1}](\mathbb{E}[\hat{\mu}_{1}])^{T} \right] \\
    &= \dfrac{1}{m} \left( \mathbb{E}[\hat{\mu}_{1}\bar{Y}_{1}^{T}] + (m-1) \mathbb{E}[\hat{\mu}_{1}](\mathbb{E}[\hat{\mu}_{1}])^{T} \right)
\end{align*}
Using the above two results, we can calculate the expectation of the RBM estimator.
\begin{align*}
    & \mathbb{E}[\hat{\Sigma}_{R, b_n}] \\
    &= \mathbb{E} \left[\dfrac{b_n}{a_nm-1}\sum_{k=1}^{m} \sum_{l=1}^{a_n} \left(\bar{Y}_{k,l} - \hat{\mu} \right) \left(\bar{Y}_{k,l} - \hat{\mu} \right) ^T \right] \\
    &= \mathbb{E} \left[\dfrac{b_n}{a_nm-1}\sum_{k=1}^{m} \sum_{l=1}^{a_n} \left(\bar{Y}_{k,l}\bar{Y}_{k,l}^T + \hat{\mu} \hat{\mu}^T - \hat{\mu}\bar{Y}_{k,l}^T - \bar{Y}_{k,l}\hat{\mu}^T \right) \right] \\
    % &= \dfrac{b_n}{a_nm-1}\mathbb{E}\left[\sum_{k=1}^{m} \sum_{l=1}^{a_n} \bar{Y}_{k,l}^{2} + am \bar{\bar{Y}}^{2} - 2 \bar{\bar{Y}}\sum_{k=1}^{m} \sum_{l=1}^{a_n} \bar{Y}_{k,l} \right] \\
    &= \dfrac{b_n}{a_nm-1}\mathbb{E}\left[\sum_{k=1}^{m} \sum_{l=1}^{a_n} \bar{Y}_{k,l}\bar{Y}_{k,l}^T - a_n m \hat{\mu} \hat{\mu}^T \right] \\
    &= \dfrac{b_n}{a_n m-1} \left(a_n m \mathbb{E}[\bar{Y}_{1,1}\bar{Y}_{1,1}^T] - a_n m \mathbb{E} \left[\hat{\mu} \hat{\mu}^T \right] \right)  \\
    &= \dfrac{b_n a_n m}{a_n m-1}\left[\dfrac{R_{0}}{b_n} + \dfrac{1}{b_n} \sum_{s=1}^{b_n -1}\left(1 - \dfrac{s}{b_n}\right) (R_{s} + R_{s}^{T}) + \mathbb{E}[Y_{1,1}](\mathbb{E}[Y_{1,1}])^{T} \right. \\
    & \quad \left. - \dfrac{1}{m} \left( \mathbb{E}[\bar{Y}_{1}\bar{Y}_{1}^{T}] + (m-1) \mathbb{E}[\bar{Y}_{1,1}](\mathbb{E}[\bar{Y}_{1,1}])^{T} \right) \right]\\
    &= \dfrac{b_n a_n m}{a_n m-1}\left[ \dfrac{R_{0}}{b_n} + \dfrac{1}{b_n} \sum_{s=1}^{b_n -1}\left(1 - \dfrac{s}{b_n}\right) (R_{s} + R_{s}^{T}) + \mathbb{E}[Y_{1,1}](\mathbb{E}[Y_{1,1}])^{T} \right. \\
    & \quad \left. - \dfrac{1}{m}\left( \dfrac{R_{0}}{n} + \dfrac{1}{n} \sum_{s=1}^{n-1}\left(1 - \dfrac{s}{n}\right) (R_{s} + R_{s}^{T}) + \mathbb{E}[Y_{1,1}](\mathbb{E}[Y_{1,1}])^{T} + (m-1) \mathbb{E}[Y_{1,1}](\mathbb{E}[Y_{1,1}])^{T} \right) \right] \\    
    &= \dfrac{b_n a_n m}{a_n m-1} \left[ R_{0} \left(\dfrac{1}{b_n} - \dfrac{1}{mn}\right) + \sum_{s=1}^{b_n-1}(R_{s} + R_{s}^{T}) \left( \dfrac{1}{b_n} - \dfrac{1}{mn}\right) -  \sum_{s=1}^{b_n-1}t (R_{s}+R_{s}^{T}) \left( \dfrac{1}{b_n^{2}} - \dfrac{1}{m n^{2}}\right) \right. \\
    & \quad \quad \left. - \dfrac{1}{mn}\sum_{s=b_n}^{n-1}\left(1 - \dfrac{s}{n}\right) (R_{s} + R_{s}^{T}) + \mathbb{E}[Y_{1,1}](\mathbb{E}[Y_{1,1}])^{T} - \dfrac{1}{m}m\mathbb{E}[Y_{1,1}](\mathbb{E}[Y_{1,1}])^{T} \right] \\
    &= \dfrac{b_n a_n m}{a_n m-1} \left[ R_{0} \left(\dfrac{1}{b_n} - \dfrac{1}{mn}\right) + \sum_{s=1}^{b_n-1}(R_{s}+R_{s}^{T}) \left( \dfrac{1}{b_n} - \dfrac{1}{mn}\right) -  \sum_{s=1}^{b_n-1}s (R_{s}+R_{s}^{T})\left( \dfrac{1}{b_n^{2}} - \dfrac{1}{m n^{2}}\right)  \right. \\
    & \quad \quad \left. - \dfrac{1}{mn}\sum_{s=b_n}^{n-1}\left(1 - \dfrac{s}{n}\right) (R_{s} + R_{s}^{T}) \right] \\
    &= R_{0} + \sum_{s=1}^{b_n-1} (R_{s}+R_{s}^{T}) - \dfrac{(m a^{2}_n - 1)}{a_n b_n (a_n m-1)} \sum_{s=1}^{b_n-1} s (R_{s}+R_{s}^{T}) - \dfrac{1}{a_n m - 1} \sum_{s=b_n}^{n-1} \left(1 - \dfrac{s}{n}\right) (R_{s}+R_{s}^{T}) 
\end{align*}
Since ABM takes a simple average of the BM estimators from the $m$ chains, the bias of ABM is the same as the bias of a single chain BM estimator. Setting $m=1$ in RBM is equivalent to the single chain BM estimator, 
\begin{align*}
&\mathbb{E}[\hat{\Sigma}_{A, b_n}] \\
&= \mathbb{E}[\hat{\Sigma}_{R, b_n}] |_{m=1} \\
&= R_{0} + \sum_{s=1}^{b_n-1} (R_{s}+R_{s}^{T}) - \dfrac{(a_n^{2} - 1)}{a_n b_n (a_n-1)} \sum_{s=1}^{b_n-1} s (R_{s}+R_{s}^{T}) - \dfrac{1}{a_n - 1} \sum_{s=b_n}^{n-1} \left(1 - \dfrac{s}{n}\right) (R_{s} + R_{s}^{T}) \\
&= R_{0} + \sum_{s=1}^{b_n-1} (R_{s}+R_{s}^{T}) - \dfrac{(a_n + 1)}{a_n b_n} \sum_{s=1}^{b_n-1} s (R_{s}+R_{s}^{T}) - \dfrac{1}{a_n - 1} \sum_{s=b_n}^{n-1} \left(1 - \dfrac{s}{n}\right) (R_{s}+R_{s}^{T})
\end{align*}
Hence, we can write the difference in the bias of RBM and ABM as 
\begin{align*}
    & \mathbb{E}[\hat{\Sigma}_{R, b_n}] - \mathbb{E}[\hat{\Sigma}_{A, b_n}] \\
    % &= (\mathbb{E}[\mathcal{N}(a,b)_{\text{RBM}}] - \Sigma^{2}) - (\mathbb{E}[\mathcal{N}(a,b)_{\text{ABM}}] - \Sigma^{2}) \\
    % &= \mathbb{E}[\mathcal{N}(a,b)_{\text{RBM}}] - \mathbb{E}[\mathcal{N}(a,b)_{\text{ABM}}] \\
    &= R_{0} + \sum_{s=1}^{b_n-1} (R_{s} + R_{s}^{T})  - \dfrac{(m a_n^{2} - 1)}{a_n b_n (a_n m-1)} \sum_{s=1}^{b_n-1} s (R_{s} + R_{s}^{T}) - \dfrac{1}{a_n m - 1} \sum_{s=b_n}^{n-1} \left(1 - \dfrac{s}{n}\right) (R_{s}+R_{s}^{T})  \\
    & \quad - \left( R_{0} + \sum_{s=1}^{b_n-1} (R_{s} + R_{s}^{T}) - \dfrac{(a_n + 1)}{a_nb_n} \sum_{s=1}^{b_n-1} s (R_{s} + R_{s}^{T})  - \dfrac{1}{a_n - 1} \sum_{s=b_n}^{n-1} \left(1 - \dfrac{s}{n}\right) (R_{s} + R_{s}^{T}) \right) \\
    &= R_{0} + \sum_{s=1}^{b_n-1} (R_{s} + R_{s}^{T}) - \left(R_{0} + \sum_{s=1}^{b_n-1} (R_{s} + R_{s}^{T}) \right) + \dfrac{(a_n+1)}{a_n b_n} \sum_{s=1}^{b_n-1} s (R_{s} + R_{s}^{T})  \\
    & \quad  - \dfrac{(ma^{2}_n - 1)}{a_n b_n (a_n m-1)} \sum_{s=1}^{b_n-1} s (R_{s} + R_{s}^{T}) + \dfrac{1}{a_n-1} \sum_{s=b_n}^{n-1} \left(1 - \dfrac{s}{n}\right) (R_{s} + R_{s}^{T}) \\
    & \quad - \dfrac{1}{a_n m - 1} \sum_{s=b_n}^{n-1} \left(1 - \dfrac{s}{n}\right) (R_{s} + R_{s}^{T}) \\
    &= \dfrac{1}{a_n b_n} \sum_{s=1}^{b_n-1}s (R_{s} + R_{s}^{T}) \left( (a_n+1) - \dfrac{ma^{2}_n -1 }{a_n m-1}\right)  + \sum_{s=b_n}^{n-1} \left(1 - \dfrac{s}{n}\right) (R_{s} + R_{s}^{T}) \left( \dfrac{1}{a_n-1} - \dfrac{1}{a_nm-1} \right) \\
    &= \dfrac{1}{a_nb_n} \sum_{s=1}^{b_n-1}s (R_{s} + R_{s}^{T}) \left( \dfrac{ma^{2}_n -a_n + a_n m - 1 -  ma^{2}_n + 1}{a_nm-1} \right)  \\
    & \quad + \sum_{s=b_n}^{n-1} \left(1 - \dfrac{s}{n}\right) (R_{s} + R_{s}^{T}) \left( \dfrac{a_n m-1 - a_n + 1}{(a_n-1)(a_nm-1)} \right) \\
    &= \sum_{s=1}^{b_n-1} \dfrac{s}{n} (R_{s} + R_{s}^{T}) \dfrac{a_n(m-1)}{a_nm-1} + \sum_{s=b_n}^{n-1} \left(1 - \dfrac{s}{n}\right) (R_{s} + R_{s}^{T}) \dfrac{a_n(m-1)}{(a_n-1)(a_nm-1)}\\
    &= \dfrac{a_n(m-1)}{a_nm-1} \left[\sum_{s=1}^{b_n-1} \dfrac{s}{n}(R_{s} + R_{s}^{T}) + \dfrac{1}{a_n-1}\sum_{s=b_n}^{n-1}\left(1 - \dfrac{s}{n}\right) (R_{s} + R_{s}^{T}) \right] 
\end{align*}
The diagonal terms of the difference in the bias of RBM and ABM are
\begin{align*}
  &\mathbb{E}[\hat{\Sigma}_{R, b_n}^{kk}] - \mathbb{E}[\hat{\Sigma}_{A, b_n}^{kk}] \\
  &= \dfrac{a_n(m-1)}{a_nm-1} \left[\sum_{s=1}^{b_n-1} \dfrac{s}{n}(R_{s}^{kk} + (R^{T})_{s}^{kk}) + \dfrac{1}{a_n-1}\sum_{s=b_n}^{n-1}\left(1 - \dfrac{s}{n}\right) (R_{s}^{kk} + (R^{T})_{s}^{kk}) \right] \\
  &= \dfrac{2a_n(m-1)}{a_n m-1} \left[\sum_{s=1}^{b_n-1} \dfrac{s}{n}(R_{s}^{kk}) + \dfrac{1}{a_n-1}\sum_{s=b}^{n-1}\left(1 - \dfrac{s}{n}\right) (R_{s}^{kk}) \right] 
\end{align*}
\end{proof}

\subsection{Variance of RBM} % \left(fold\right)
\label{subsec:variance_RBM}

Recall that $\hat{\Sigma}_{R,L}$ is the lugsail RBM estimator and $\hat{\Sigma}_{A,L}$ is the lugsail averaged batch means estimator. Further, let $\hat{\Sigma}^{ij}_{R,L}$ and $\hat{\Sigma}^{ij}_{A,L}$ be the $\left(i,j\right)$th element of the matrices, respectively. We will prove that the variance of both estimators is equivalent for large sample sizes. Due to the strong consistency proof, as $n \to \infty$,
\begin{equation}
\label{eq:rbm_abm_consis}
\left\|\hat{\Sigma}_{R,L} - \dfrac{m\left(a_n-1\right)}{a_nm - 1} \hat{\Sigma}_{A,L} \right\| \to 0 \text{ with probability 1}\,. 
\end{equation}
For ease of notation, set $C = m\left(a_n-1\right)/\left(a_n m - 1\right)$. Further, define
\begin{align*}
g_1\left(n\right) &= 2\left(\dfrac{1}{1-c} \dfrac{a_nm}{a_nm - 1} +  \dfrac{c}{1-c} \dfrac{a_nm}{a_nrm - 1} \right)\dfrac{b_n \psi\left(n\right)^2}{n^2} \\ 
g_2\left(n\right) &= 2\left(\dfrac{1}{1-c} \dfrac{a_nm}{a_nm - 1} +  \dfrac{c}{1-c} \dfrac{a_nm}{a_nrm - 1} \right) \left(\Sigma_{ii}+\Sigma_{jj}\right)\left(1+\epsilon\right)\dfrac{b_n\psi\left(n\right) \sqrt{2n \log\log n}}{n^2}\\
g_3\left(n\right) &= 2\left(\dfrac{1}{1-c} \dfrac{a_nm}{a_nm - 1} +  \dfrac{c}{1-c} \dfrac{a_nm}{a_nrm - 1} \right)\Sigma_{ii}\Sigma_{jj}\left(1 + \epsilon\right)^2 \dfrac{2b_n\log\log n}{n}\,.
\end{align*}
Since $\psi(n) = O(\sqrt{n \log \log n})$, if $\left(b_n\log \log n\right)/n \to 0$, then $g_1, g_2, g_3 \to 0$.
From the steps in Theorem~\ref{thm:RBM_consistency},
\begin{align*}
 &\left| \hat{\Sigma}_{R,L}^{ij} - C \hat{\Sigma}_{A,L}^{ij} \right|\\
 & =  \left| \dfrac{1}{1-c} \left(\hat{\Sigma}_{R,b_n}^{ij} - C \hat{\Sigma}_{A,b_n}^{ij} \right) - \dfrac{c}{1-c} \left(\hat{\Sigma}_{R,b_n/r}^{ij} - C \hat{\Sigma}_{A,b_n/r}^{ij} \right)  \right|\\ 
 & = \left| \dfrac{1}{1-c} \left( \dfrac{a_nb_n}{a_nm - 1} \sum_{k=1}^{m} \left(\hat{\mu}_{k}^{(i)} - \hat{\mu}^{(i)}\right)\left(\hat{\mu}_{k}^{(j)} - \hat{\mu}^{(j)}\right)^T \right) - \dfrac{c}{1-c} \left( \dfrac{a_nb_n}{a_nrm - 1} \sum_{k=1}^{m} \left(\hat{\mu}_{k}^{(i)} - \hat{\mu}^{(i)}\right)\left(\hat{\mu}_{k}^{(j)} - \hat{\mu}^{(j)}\right)^T  \right) \right| \\
 &\leq \dfrac{1}{1-c} \dfrac{a_nb_n}{a_nm - 1}\left| \sum_{k=1}^{m} \left(\hat{\mu}_{k}^{(i)} - \hat{\mu}^{(i)}\right)\left(\hat{\mu}_{k}^{(j)} - \hat{\mu}^{(j)}\right)^T \right| + \dfrac{c}{1-c} \dfrac{a_nb_n}{a_nrm - 1}\left| \sum_{k=1}^{m} \left(\hat{\mu}_{k}^{(i)} - \hat{\mu}^{(i)}\right)\left(\hat{\mu}_{k}^{(j)} - \hat{\mu}^{(j)}\right)^T \right| \\ 
 & =  \left(\dfrac{1}{1-c} \dfrac{a_nb_n}{a_nm - 1} +  \dfrac{c}{1-c} \dfrac{a_nb_n}{a_nrm - 1} \right)\left| \sum_{k=1}^{m} \left(\hat{\mu}_{k}^{(i)} - \hat{\mu}^{(i)}\right)\left(\hat{\mu}_{k}^{(j)} - \hat{\mu}^{(j)}\right)^T \right|\\
 & \leq D^2g_1\left(n\right) + Dg_2\left(n\right) + g_3\left(n\right)\,.
\end{align*}
By \eqref{eq:rbm_abm_consis}, there exists an $N_0$ such that
\begin{align*}
\left(\hat{\Sigma}_{R,L}^{ij} - C \hat{\Sigma}_{A,L}^{ij} \right)^2 &= \left(\hat{\Sigma}_{R,L}^{ij} - C \hat{\Sigma}_{A,L}^{ij} \right)^2 \, I\left(0 \leq n \leq N_0\right) + \left(\hat{\Sigma}_{R,L}^{ij} - C \hat{\Sigma}_{A,L}^{ij} \right)^2 \, I\left(n > N_0\right)\\
& \leq \left(\hat{\Sigma}_{R,L}^{ij} - C \hat{\Sigma}_{A,L}^{ij} \right)^2 \, I\left(0 \leq n \leq N_0\right) +  \left(D^2g_1\left(n\right) + Dg_2\left(n\right) + g_3\left(n\right) \right)^2 I\left(n > N_0\right)\\
& := g_n^*\left(X_{11}, \dots, X_{1n}, \dots, X_{m1}, \dots, X_{mn}\right)\,.
\end{align*}
But since by assumption $\E D^4 <\infty$, $\E \|h\|^4 <\infty$,
\[
\E \left| g_n^* \right| \leq  \E \left[\left(\hat{\Sigma}_{R,L}^{ij} - C \hat{\Sigma}_{A,L}^{ij} \right)^2 \right] + \E \left[\left(D^2g_1\left(n\right) + Dg_2\left(n\right) + g_3\left(n\right) \right)^2 \right] < \infty\,.
\]
Thus, $\E \left| g_n^* \right| < \infty$ and further as $n \to \infty$, $g_n \to 0$ under the assumptions. Since $g_1, g_2, g_3 \to 0$, $\E g_n^* \to 0$. By the majorized convergence theorem \citep{zeid:2013}, as $n \to \infty$,
\begin{equation}
\label{eq:squared_mean_diff}
  \E \left[\left(\hat{\Sigma}_{R,L}^{ij} - C \hat{\Sigma}_{A,L}^{ij} \right)^2 \right] \to 0\,.
\end{equation}

We will use \eqref{eq:squared_mean_diff} to show that the variances are equivalent. Define,
\[
\Omega\left(\hat{\Sigma}_{R,L}^{ij}, \hat{\Sigma}_{A,L}^{ij} \right) = \Var\left(\hat{\Sigma}_{R,L}^{ij} - C \hat{\Sigma}_{A,L}^{ij} \right) + 2 C \E\left[ \left(\hat{\Sigma}_{R,L}^{ij} - C \hat{\Sigma}_{A,L}^{ij} \right) \left(\hat{\Sigma}_{A,L}^{ij}  - \E \left( \hat{\Sigma}_{A,L}^{ij} \right) \right) \right] \; .
\]
We will show that the above is $o\left(1\right)$. Using Cauchy-Schwarz inequality followed by \eqref{eq:squared_mean_diff},
\begin{align*}
& \left|  \Omega\left(\hat{\Sigma}_{R,L}^{ij}, \hat{\Sigma}_{A,L}^{ij} \right) \right|\\
& \leq \left| \Var\left(\hat{\Sigma}_{R,L}^{ij} - C \hat{\Sigma}_{A,L}^{ij} \right) \right| + \left| 2 C \E\left[ \left(\hat{\Sigma}_{R,L}^{ij} - C \hat{\Sigma}_{A,L}^{ij} \right) \left(\hat{\Sigma}_{A,L}^{ij}  - \E \left( \hat{\Sigma}_{A,L}^{ij} \right) \right) \right]\right| \\ 
& = \E\left[\left(\hat{\Sigma}_{R,L}^{ij} - C \hat{\Sigma}_{A,L}^{ij} \right)^2 \right] + 2 C \left| \left(\E\left[ \left(\hat{\Sigma}_{R,L}^{ij} - C \hat{\Sigma}_{A,L}^{ij} \right)^2 \right]  \Var\left(\hat{\Sigma}_{A,L}^{ij}  \right)   \right)^{1/2}\right| \\ 
& = o\left(1\right) + 2C\left(o\left(1\right) \left(O\left( \dfrac{b_n}{n}\right)  + o\left( \dfrac{b_n}{n}\right) \right)  \right) \\ 
& = o\left(1\right)\,.
\end{align*}
Finally,
\begin{align*}
 & \Var\left(\hat{\Sigma}_{R,L}^{ij} \right)\\
  & = \E \left[ \left(\hat{\Sigma}_{R,L}^{ij}  - \E \left[\hat{\Sigma}_{R,L}^{ij}  \right] \right)^2 \right]\\
& = \E \left[ \left(\hat{\Sigma}_{R,L}^{ij} \pm C\hat{\Sigma}_{A,L}^{ij} \pm C\E \left[ \hat{\Sigma}_{A,L}^{ij}\right] - \E \left[\hat{\Sigma}_{R,L}^{ij}  \right] \right)^2 \right]\\
& = \E\left[ \left( \left(\hat{\Sigma}_{R,L}^{ij} - C\hat{\Sigma}_{A,L}^{ij} \right) + C\left(\hat{\Sigma}_{A,L}^{ij}  - \E\left[\hat{\Sigma}_{A,L}^{ij}\right]\right) + \left(C\E\left[\hat{\Sigma}_{A,L}^{ij}\right] - \E \left[\hat{\Sigma}_{R,L}^{ij}  \right] \right)  \right)^2 \right] \\ 
& = C^{2} \E\left[ \left(\hat{\Sigma}_{A,L}^{ij}  - \E\left[\hat{\Sigma}_{A,L}^{ij}\right]\right)^2 \right] + \E \left[ \left(\left(\hat{\Sigma}_{R,L}^{ij} - C\hat{\Sigma}_{A,L}^{ij} \right) + \left(C\E\left[\hat{\Sigma}_{A,L}^{ij}\right] - \E \left[\hat{\Sigma}_{R,L}^{ij}  \right] \right) \right)^2 \right] \\
& \quad \quad + 2C\E\left[\left(\hat{\Sigma}_{A,L}^{ij}  - \E\left[\hat{\Sigma}_{A,L}^{ij}\right]\right) \left(\hat{\Sigma}_{R,L}^{ij} - C\hat{\Sigma}_{A,L}^{ij} \right)\right] \\
& \quad \quad + 2C \E\left[\left(\hat{\Sigma}_{A,L}^{ij}  - \E\left[\hat{\Sigma}_{A,L}^{ij}\right]\right) \left(C\E\left[\hat{\Sigma}_{A,L}^{ij}\right] - \E \left[\hat{\Sigma}_{R,L}^{ij}  \right] \right)\right]\\
& = C^{2}\Var\left( \hat{\Sigma}_{A,L}^{ij}\right) + \Var\left(\hat{\Sigma}_{R,L}^{ij} - C \hat{\Sigma}_{A,L}^{ij} \right) + 2 C \E\left[ \left(\hat{\Sigma}_{R,L}^{ij} - C \hat{\Sigma}_{A,L}^{ij} \right) \left(\hat{\Sigma}_{A,L}^{ij}  - \E \left( \hat{\Sigma}_{A,L}^{ij} \right) \right) \right] + o\left(1\right)\\
& = C^{2}\Var\left( \hat{\Sigma}_{A,L}^{ij}\right) + \Omega\left(\hat{\Sigma}_{R,L}^{ij}, \hat{\Sigma}_{A,L}^{ij} \right) + o\left(1\right)\\
& = C^{2}\Var\left( \hat{\Sigma}_{A,L}^{ij}\right) + o\left(1\right) + o\left(1\right)\,.
\end{align*}
Thus,
\begin{align*}
 &\Var\left(\hat{\Sigma}_{R,L}^{ij} \right) \\
 & = \dfrac{m^{2}\left(a_n-1\right)^{2}}{\left(a_n m - 1\right)^{2}}\Var\left(\hat{\Sigma}_{A,L}^{ij} \right) +o\left(1\right)\\
&= \dfrac{m\left(a_n-1\right)^{2}}{\left(a_n m - 1\right)^{2}}\Var\left(\hat{\Sigma}_{B,l}^{ij} \right) +o\left(1\right)\\ 
& = \dfrac{m\left(a_n-1\right)^{2}}{\left(a_n m - 1\right)^{2}}\left[\dfrac{b_n}{n}\left(\dfrac{1}{r} + \dfrac{r - 1}{r\left(1 - c\right)^{2}}\right)\left(\Sigma^{2}_{ij} + \Sigma_{ii}\Sigma_{jj}\right) + o\left(\dfrac{b_n}{n}\right)\right] +o\left(1\right)\\
&= \dfrac{m\left(a_n-1\right)^{2}}{a_n\left(a_n m - 1\right)^{2}}\left[\left(\dfrac{1}{r} + \dfrac{r - 1}{r\left(1 - c\right)^{2}}\right)\left(\Sigma^{2}_{ij} + \Sigma_{ii}\Sigma_{jj}\right)\right] + o\left(1\right)\\
&= \dfrac{m^{2}\left(a_n-1\right)^{2}}{a_nm\left(a_n m - 1\right)^{2}}\left[\left(\dfrac{1}{r} + \dfrac{r - 1}{r\left(1 - c\right)^{2}}\right)\left(\Sigma^{2}_{ij} + \Sigma_{ii}\Sigma_{jj}\right)\right] + o\left(1\right)\\
&= \left(\dfrac{\left(a_n m\right)^{2} + m^{2} - 2a_n m^{2}}{a_nm\left(a_n m - 1\right)^{2}}\right)\left[\left(\dfrac{1}{r} + \dfrac{r - 1}{r\left(1 - c\right)^{2}}\right)\left(\Sigma^{2}_{ij} + \Sigma_{ii}\Sigma_{jj}\right)\right] + o\left(1\right)\\
&= \left(\dfrac{\left(\left(a_n m\right)^{2} - 2a_n m + 1 + 2a_n m - 1 + m^{2} - 2a_n m^{2}\right)}{a_nm\left(a_n m - 1\right)^{2}}\right)\left[\left(\dfrac{1}{r} + \dfrac{r - 1}{r\left(1 - c\right)^{2}}\right)\left(\Sigma^{2}_{ij} + \Sigma_{ii}\Sigma_{jj}\right)\right] + o\left(1\right)\\
&= \left(\dfrac{1}{a_nm} - \dfrac{2\left(m - 1\right)}{\left(a_n m - 1\right)^{2}} + \dfrac{m^{2} - 1}{a_nm\left(a_n m - 1\right)^{2}}\right)\left[\left(\dfrac{1}{r} + \dfrac{r - 1}{r\left(1 - c\right)^{2}}\right)\left(\Sigma^{2}_{ij} + \Sigma_{ii}\Sigma_{jj}\right)\right] + o\left(1\right)\\
&= \dfrac{1}{a_nm}\left[\left(\dfrac{1}{r} + \dfrac{r - 1}{r\left(1 - c\right)^{2}}\right)\left(\Sigma^{2}_{ij} + \Sigma_{ii}\Sigma_{jj}\right)\right] + o\left(1\right)\\
&= \dfrac{b_n}{mn}\left[\left(\dfrac{1}{r} + \dfrac{r - 1}{r\left(1 - c\right)^{2}}\right)\left(\Sigma^{2}_{ij} + \Sigma_{ii}\Sigma_{jj}\right)\right] + o\left(1\right) \; .
 \end{align*}

\subsection{Proof of Theorem~\ref{thm:naive_convergence}}
\label{subsec: naive_convergence}
\begin{proof}[Proof of Theorem~\ref{thm:naive_convergence}]

The proof follows from the fact that under Assumption~\ref{ass:mixing}, a CLT holds for both $\hat{\mu}_k$ and $\hat{\mu}$. That is,
 \[
\sqrt{n}(\hat{\mu}_k - \mu)\overset{d}{\to} N(0, \Sigma)\, \quad \text{ and } \quad \sqrt{mn}(\hat{\mu} - \mu)\overset{d}{\to} N(0, \Sigma)\,.
 \]
By Lemma~\ref{lem:Y_diff_simpli},
\begin{align*}
 \hat{\Sigma}_N &= \dfrac{1}{m-1} \left[n\ds \sum_{k=1}^{m} (\hat{\mu}_k - \mu)(\hat{\mu}_k - \mu)^T - mn (\hat{\mu} - \mu)(\hat{\mu} - \mu)^T \right]\\	& = \dfrac{1}{m-1} \left[\ds \sum_{k=1}^{m} \sqrt{n}(\hat{\mu}_k - \mu) \sqrt{n}(\hat{\mu}_k - \mu)^T - \sqrt{mn} (\hat{\mu} - \mu)\sqrt{mn}(\hat{\mu} - \mu)^T \right]\,.
 \end{align*} 
By a standard argument of asymptotic independence of $\hat{\Sigma}_N$ and $\hat{\mu}$,
\[
\hat{\Sigma}_N \overset{d}{\to}W_p(\Sigma, m-1)/(m-1)\,.
\]
\end{proof}
 
\subsection{Bivariate Normal Gibbs Asymptotic Variance}
\label{subsec:Gibbs_var}

The diagonals of $\Sigma$ are indirectly obtained by \cite{gey:1995} using a slightly different proof technique. For $t \geq 0$, let $\epsilon_{1,t} \sim N\left(0, \omega_1 - \rho^{2}/{\omega_2}\right)$ and let $\epsilon_{2,t} \sim N(0, \omega_2 - \rho^2/\omega_1)$, independent of each other. For a given state at time $t-1$, the next state is drawn from
\begin{equation}
\label{eqn:gibbs1_t}
X_{1,t}  =  \mu_{1} + \dfrac{\rho}{\omega_2}\left(X_{2,t-1} - \mu_{2}\right) + \epsilon_{1,t}\,
\end{equation}
\begin{equation}
\label{eqn:gibbs2_t}
X_{2,t}  =  \mu_{2} + \dfrac{\rho}{\omega_1}\left(X_{1,t} - \mu_{1}\right) +  \epsilon_{2,t}
\end{equation}
Additionally, for $t \geq 1$, define, 
\[
w_t = \dfrac{\rho}{\omega_2} \epsilon_{2,t-1} + \epsilon_{1,t} \sim N\left(0, \omega_1\left(1 - \dfrac{\rho^{4}}{\omega_1^{2} \omega_2^{2}}\right)\right)\,.
\]
Substituting $X_{2,t - 1}$ from \eqref{eqn:gibbs2_t} to \eqref{eqn:gibbs1_t} gives,
\begin{align*}
    X_{1,t} &= \mu_{1} + \dfrac{\rho}{\omega_2}\left(\dfrac{\rho}{\omega_1}X_{1,t-1} - \dfrac{\rho}{\omega_1}\mu_{1} + \epsilon_{2,t-1}\right) + \epsilon_{1,t}\\
    &= \mu_{1} + \dfrac{\rho^{2}}{\omega_1 \omega_2}X_{1,t-1} - \dfrac{\rho^{2}}{\omega_1\omega_2}\mu_{1} + \dfrac{\rho}{\omega_2} \epsilon_{2,t-1} + \epsilon_{1,t}\\
    &= \mu_{1}\left(1 - \dfrac{\rho^{2}}{\omega_1\omega_2}\right) + \dfrac{\rho^{2}}{\omega_1\omega_2}X_{1,t-1} + w_t\,. \numberthis \label{eqn:gibbs_AR1}
\end{align*}
This is an autoregressive process of order 1 (AR(1) process). The process is also stationary by the assumption that $\rho^2 < \omega_1\omega_2$ and the stationary distribution is $N(\mu_1, \omega_1)$. Thus $\Var(X_1) = \omega_1$ and similarly $\Var(X_2) = \omega_2$.  Using the properties of an AR(1) process, 
\[
\Cov\left(X_{1,1}, X_{1,1+k}\right) = \Var\left(X_{1}\right)\left(\dfrac{\rho^{2} }{\omega_1\omega_2}\right)^{k} = \omega_1\left( \dfrac{\rho^{2}}{\omega_1\omega_2}\right)^{k}\,.
\]

The diagonal terms of the asymptotic covariance matrix for $i = 1,2$ are
\[
\Sigma_{ii} = \lim_{n\to\infty} n \Var\left(\dfrac{1}{n}\sum_{t=1}^{n} X_{i,t}\right)\,.
\]
Without loss of generality, consider $i = 1$
\begin{align*}
  \Sigma_{11} &= \Var\left(X_{1}\right) + 2\sum_{k=1}^{\infty} \Cov\left(X_{1,1}, X_{1,1+k}\right)\\
    &= \omega_1 + 2\sum_{j=2}^{\infty} \omega_1 \left(\dfrac{\rho^{2}}{\omega_1 \omega_2}\right)^{k}
    % &= \omega_1 + 2\omega_1\dfrac{\rho^{2} }{\omega_1\omega_2 - \rho^{2}}\\
    % &= a\left(\dfrac{1 - \dfrac{\rho^{2}}{a_nb_n} + 2\dfrac{\rho^{2}}{a_nb_n}}{1 - \dfrac{\rho^{2}}{a_nb_n}}\right)\\
    = \omega_1\left(\dfrac{\omega_1\omega_2 + \rho^{2}}{\omega_1\omega_2 - \rho^{2}}\right) \; .
\end{align*}
Similarly, 
\[
\Sigma_{22} = \omega_2 \left(\dfrac{\omega_1 \omega_2 + \rho^{2}}{\omega_1\omega_2 - \rho^{2}}\right) \; .
\]
Consider the off-diagonal terms,
\begin{equation}
\label{eq:binorm_cov_limit}
  \Sigma_{12} = \Sigma_{21} = \Cov(X_{1,1}, X_{2,1}) + \sum_{k=1}^{\infty} \left[\Cov\left( X_{1,1} , X_{2,1 + k}\right) + \Cov\left( X_{2,1} , X_{1,1 + k}\right)  \right]\,.
\end{equation}
We have $\Cov\left( X_{1,1} , X_{2,1 + k}\right) = \mathbb{E}(X_{1,1} X_{2,1 + k}) - \mu_1 \mu_2$ and using \eqref{eqn:gibbs2},
\begin{align*}
    \mathbb{E}\left(X_{1,1}X_{2,1+k}\right) &= \mathbb{E}\left[X_{1,1}\left(\mu_{2} - \dfrac{\rho}{\omega_1 }\mu_{1} + \dfrac{\rho}{\omega_1 } X_{1,1+k} + \epsilon_{2,1+k}\right)\right]\\
    &= \mu_{1}\mu_{2} - \dfrac{\rho}{\omega_1}\mu_{1}^{2} + \dfrac{\rho}{\omega_1}\mathbb{E}\left(X_{1,1}X_{1,1+k}\right) + 0\\
    &= \mu_{1}\mu_{2} - \dfrac{\rho}{\omega_1}\mu_{1}^{2} + \dfrac{\rho}{\omega_1}\left[\Cov\left(X_{1,1}, X_{1,1+k}\right) + \mathbb{E}\left(X_{1,1}\right)\mathbb{E}\left(X_{1,1+k}\right)\right]\\
    &= \mu_{1}\mu_{2} - \dfrac{\rho}{\omega_1}\mu_{1}^{2} + \dfrac{\rho}{\omega_1}\left(\omega_1 \left(\dfrac{\rho^{2}}{\omega_1 \omega_2}\right)^{k} + \mu_{1}^{2}\right)\\
    &= \mu_{1}\mu_{2} + \rho\left(\dfrac{\rho^{2}}{\omega_1\omega_2}\right)^{k}\,. \numberthis \label{eq:binorm_cov_term_1}
\end{align*}
Similarly, simplification of $\mathbb{E}\left(X_{1,1}X_{2,1+k}\right)$ using \eqref{eqn:gibbs1} gives
\begin{align*}
    \mathbb{E}\left(X_{2,1}X_{1,1+k}\right) &= \mathbb{E}\left[X_{2,1}\left(\mu_{1} - \dfrac{\rho}{\omega_2}\mu_{2} + \dfrac{\rho}{\omega_2}X_{2,k} + \epsilon_{1,k} \right)\right]\\
    &= \mu_{1}\mu_{2} - \dfrac{\rho}{\omega_2}\mu_{2}^{2} + \dfrac{\rho}{\omega_2}\mathbb{E}\left(X_{2,1}X_{2,k}\right) + 0\\
    &= \mu_{1}\mu_{2} - \dfrac{\rho}{\omega_2}\mu_{2}^{2} + \dfrac{\rho}{\omega_2}\left( \Cov\left(X_{2,1}, X_{2,k}\right) + \mathbb{E}\left(X_{2,1}\right)\mathbb{E}\left(X_{2,k}\right)\right)\\
    &= \mu_{1}\mu_{2} - \dfrac{\rho}{\omega_2}\mu_{2}^{2} + \dfrac{\rho}{\omega_2}\left(\omega_2\left(\dfrac{\rho^{2}}{\omega_1\omega_2}\right)^{k - 1} + \mu_{1}^{2}\right)\\
    &= \mu_{1}\mu_{2} + \rho\left(\dfrac{\rho^{2}}{\omega_1\omega_2}\right)^{k - 1}\,. \numberthis \label{eq:binorm_cov_term_2}
\end{align*}
Using \eqref{eq:binorm_cov_term_1} and \eqref{eq:binorm_cov_term_2} in \eqref{eq:binorm_cov_limit}
\begin{align*}
\Sigma_{12} &=\Sigma_{21} = \Cov(X_{1,1}, X_{2,1}) + \sum_{k=1}^{\infty} \left[\Cov\left( X_{1,1} , X_{2,1 + k}\right) + \Cov\left( X_{2,1} , X_{1,1 + k}\right)  \right]\\
& = \rho + \ds \sum_{k=1}^{\infty} \left[\rho\left(\dfrac{\rho^{2}}{\omega_1\omega_2}\right)^{k} + \rho\left(\dfrac{\rho^{2}}{\omega_1\omega_2}\right)^{k-1}  \right] = \dfrac{2\omega_1\omega_2 \rho}{\omega_1\omega_2 - \rho^2}\,.
\end{align*}

\singlespacing
\bibliographystyle{apalike}
\bibliography{mcref}

\end{document}